\documentclass[a4paper,UKenglish,cleveref, autoref, thm-restate,]{lipics-v2021}
\pdfoutput=1


\raggedbottom 

\bibliographystyle{plainurl}

\title{On Homogeneous Models of Fluted Languages} 

\titlerunning{On Homogeneous Models of Fluted Languages} 

\usepackage[inline]{enumitem}
\usepackage{stfloats} 
\usepackage{array}
\usepackage{makecell}
\newcolumntype{?}{!{\vrule width 1.1pt}}
\usepackage{xparse}

\NewDocumentCommand{\N}{}{\ensuremath{\mathbb{N}}} 

\NewDocumentCommand{\sizeof}{m}{||#1||}
\NewDocumentCommand{\fA}{}{\ensuremath{\mathfrak{A}}} 
\NewDocumentCommand{\fB}{}{\ensuremath{\mathfrak{B}}} 
\NewDocumentCommand{\fG}{}{\ensuremath{\mathfrak{G}}} 

\NewDocumentCommand{\AF}{}{\ensuremath{\mathcal{AF}}} 

\NewDocumentCommand{\defctl}{ O{} O{\ensuremath{\tau}} }{\ensuremath{\langle #2, L #1, R #1\rangle}}
\NewDocumentCommand{\defctlzero}{ O{\ensuremath{\tau}} }{\ensuremath{\langle #1, \emptyset, \emptyset \rangle}}

\NewDocumentCommand{\defmsc}{ O{} O{\fA} O{\fB} }{\ensuremath{\langle \Lambda_{#2} #1, \Lambda_{#3} #1, L #1, R #1 \rangle}}
\NewDocumentCommand{\defmsczero}{ O{} O{\fA} O{\fB} }{\ensuremath{\langle \Lambda_{#2} #1, \Lambda_{#3} #1, \emptyset, \emptyset \rangle}}

\NewDocumentCommand{\defbhv}{ O{} O{\ensuremath{\mu}} O{\ensuremath{\lambda}} }{\ensuremath{\langle #2, #3, L #1, R #1, w #1\rangle}}
\NewDocumentCommand{\defbhvzero}{ O{\ensuremath{\mu}} O{\ensuremath{\lambda}} }{\ensuremath{\langle #1, #2, \emptyset, \emptyset, \emptyset \rangle}}

\NewDocumentCommand{\wordctl}{ O{i} }{$#1$-constellation}
\NewDocumentCommand{\wordctls}{ O{i} }{$#1$-constellations}

\NewDocumentCommand{\ftp}{}{\ensuremath{ \mathsf{ftp} }}

\NewDocumentCommand{\FTP}{}{\ensuremath{ \mathsf{FTP} }}

\NewDocumentCommand{\fpr}{}{\ensuremath{ \mathsf{fpr} }}

\NewDocumentCommand{\FL}{}{\ensuremath{\mathcal{FL}}}

\NewDocumentCommand{\FLC}{}{\ensuremath{\mathcal{FLC}}}

\NewDocumentCommand{\FLPC}{}{\ensuremath{\mathcal{FLPC}}}

\NewDocumentCommand{\restr}{}{\ensuremath{\!\!\restriction\!\!}}

\NewDocumentCommand{\psA}{ O{} O{} O{} }{\ensuremath{(\fA_{#1}, #2 \bar{a}_{#1} #3)}}

\NewDocumentCommand{\psB}{ O{} O{} O{} }{\ensuremath{(\fB_{#1}, #2 \bar{b}_{#1} #3)}}

\NewDocumentCommand{\sig}{}{\text{sig}}

\NewDocumentCommand{\modInts}{ O{p} }{\ensuremath{\mathbb{Z} / #1 \mathbb{Z}}}

\newcommand{\NN}{\ensuremath{\N{\times}\N}} 
\newcommand{\NNs}{\ensuremath{(\NN)}} 

\author{Daumantas Kojelis}
{Department of Computer Science, University of Manchester, UK
  \and \url{https://daumantaskojelis.github.io/}
}
{daumantas.kojelis@manchester.ac.uk}
{https://orcid.org/0000-0002-1632-9498}
{}

\authorrunning{D. Kojelis} 

\Copyright{Daumantas Kojelis} 

\ccsdesc[500]{Theory of computation~Complexity theory and logic} 

\keywords{Fluted Fragment, Fluted Logic, Fluted Fragment with Periodic Counting, Adjacent Fragment, Adjacent Fragment with Counting, Adjacent Fragment with Periodic Counting, Counting Quantifiers,
Periodic Counting Quantifiers,
 Decidable Fragments of First-Order Logic}

\category{} 

\relatedversion{} 

\supplement{}

\funding{This work is supported by the NCN grant 2018/31/B/ST6/03662.}

\acknowledgements{The author would like to thank Dr. Ian Pratt-Hartmann and Dr. Bartosz Jan Bednarczyk for their valuable feedback.}

\nolinenumbers 

\hideLIPIcs  

\EventEditors{J\"{o}rg Endrullis and Sylvain Schmitz}
\EventNoEds{2}
\EventLongTitle{33rd EACSL Annual Conference on Computer Science Logic (CSL 2025)}
\EventShortTitle{CSL 2025}
\EventAcronym{CSL}
\EventYear{2025}
\EventDate{February 10--14, 2025}
\EventLocation{Amsterdam, Netherlands}
\EventLogo{}
\SeriesVolume{326}
\ArticleNo{9}

\begin{document}

\maketitle

\begin{abstract}
    We study the fluted fragment of first-order logic which is often viewed as a multi-variable non-guarded
    extension to various systems of description logics lacking role-inverses.
    In this paper we show that satisfiable fluted sentences (even under reasonable extensions) admit
    special kinds of ``nice'' models which we call globally/locally homogeneous.
    Homogeneous models allow us to simplify methods for analysing fluted logics with counting quantifiers and
    establish a novel result for the decidability of the (finite) satisfiability problem for the fluted fragment with periodic counting.
    More specifically, we will show that the (finite) satisfiability problem for the language is \textsc{Tower}-complete.
    If only two variable are used, computational complexity drops to \textsc{NExpTime}-completeness.
    We supplement our findings by showing that generalisations of fluted logics, such as the adjacent fragment,
    have finite and general satisfiability problems which are, respectively, $\Sigma^0_1$- and $\Pi^0_1$-complete.
    Additionally, satisfiability becomes $\Sigma^1_1$-complete if periodic counting quantifiers are permitted.
\end{abstract}

\section{Introduction}

The \textit{fluted fragment} (denoted $\FL$) is a fragment of first-order logic in which,
roughly put, variables appear in predicates following the order in which they were quantified.
For illustrative purposes, we translate the sentence ``Every conductor nominates their favorite soloist to play at every concert'' into
this language as follows:
\begin{equation} \label{eq:intro_example}
    \forall x_1 \Big(cond(x_1) \rightarrow
    \exists x_2 \big(solo(x_2) \wedge fav(x_1, x_2) \wedge
    \forall x_3 (conc(x_3) \to nom(x_1, x_2, x_3) ) \big) \Big).
\end{equation}
As a non-example, the sentences axiomatising transitivity, symmetry and reflexivity of a relation are not in the fluted fragment.

The fluted fragment is a member of \textit{argument-sequence logics} -- a family of decidable (in terms of satisfiability) fragments of first-order logic
which also includes the \textit{ordered} \cite{herzig90, Jaakkola21}, \textit{forward} \cite{Bednarczyk21}
and \textit{adjacent} \cite{bki23} fragments.
The fluted fragment in particular is decidable in terms of satisfiability even in the presence of
counting quantifiers \cite{ph21} or a distinguished transitive relation \cite{pl_lt19}. Surprisingly, the satisfiability problem for $\FL$
under a combination of the two not only retains decidability but also has the finite model property~\cite{ph_lt23}.
We refer the reader to \cite{adj:p-ht22} for a survey.

In this paper we will mostly be concerned with what we call the \textit{fluted fragment with periodic counting} (denoted $\FLPC$).
We remark that periodic counting quantifiers generalise standard (threshold) counting quantifiers
which have been an object of intensive study as an extension for the fluted fragment in the past few years \cite{ph21, ph_lt23}. 
Under this new formalism, we are allowed to write formulas requesting an even number of existential witnesses.
As an example, we can express sentences as ``Every orchestra hires an even number of people to play first violin'' in our language (see (\ref{eq:prereq_example})).

The origins of \textit{flutedness} trace back to the works of W. V. Quine \cite{quine69}. It is, however, the definition given by W. C. Purdy (in \cite{purdy96})
that has become widespread and will be the one we use. The popularity of Purdy's idea of flutedness is not without cause,
at least when keeping the field of description logics in mind.
Indeed, after a routine translation,
formulas of the description logic $\mathcal{ALC}$ are contained
in the two-variable sub-fragment of $\FLPC$. This is even the case when $\mathcal{ALC}$ is augmented with
role hierarchies, nominals and/or cardinality restrictions (possibly with modulo operations). We refer the reader to \cite{hsg04} for more details.
In terms of expressive power, $\FLPC$ closely parallels $\mathcal{ALCSCC}$ --
a new formalism with counting constrains expressible in quantifier-free Boolean algebra with Presburger arithmetic (see \cite{QFBAPA, ALCSCC}).
Thus, noting that the guarded fragment with at least three variables becomes undecidable
under counting extensions \cite{Gradel99}, and that the
guarded fluted fragment has ``nice'' model theoretic properties such as \textit{Craig interpolation} \cite{BednarczykJ22},
fluted languages emerge as perfect candidates
for generalising description logics in a multi-variable context.

\begin{table*}[b]
    \begin{tabular}{|c?c|c||c|c|}
    \hline
  
                & $\FL^2$                                           & $\FL^\ell$                                          & $\AF^3$                                                      & $\AF^{k}$                                                                             \\
    \Xhline{3\arrayrulewidth}
    
    standard& \textsc{NExp}-c \cite{GradelKV97}                 & $(\ell{-}2)$-\textsc{NExp} \cite{phst19}              & \textsc{NExp}-c \cite{bki23}                                 & $(k{-}2)$-\textsc{NExp} \cite{bki23}                                                    \\
    \hline
  
    counting&  \textsc{NExp}-c \cite{Pratt-Hartmann10}          & $(\ell{-}1)$-\textsc{NExp} \cite{ph21}                & $\Sigma^0_1$-c/$\Delta^0_1$ Th \ref{th_FL_rev_fin_undec}/claim            & $\Sigma^0_1$-c/$\Pi^0_1$-c Th \ref{th_FL_rev_fin_undec}/\ref{th_FLCrev_FLPCrev_gen_undec}    \\
    \hline
  
    periodic& \textsc{NExp}-c Th~\ref{th:2_var_NExp}            & $(\ell{-}1)$-\textsc{NExp} Th~\ref{th:more_vars_time} & $\Sigma^0_1$-c/$\Sigma^0_1$-h Th \ref{th_FL_rev_fin_undec}  & $\Sigma^0_1$-c/$\Sigma^1_1$-c Th \ref{th_FL_rev_fin_undec}/\ref{th_FLCrev_FLPCrev_gen_undec} \\
    \hline
  
    \end{tabular}
    \mbox{}
    %
    \caption{Complexity of finite (left-hand side of ``/'') and general (right-hand side of ``/'') satisfiability problems for languages (in the top row) under quantifier extensions (on the left-most column).
    All complexity classes are in regard to time. $\mathcal{C}$-c ($\mathcal{C}$-h) stands for complete (hard). Here $k \geq 4$ and $\ell \geq 3$.}
    %
    \label{table_complexity}
  \end{table*}

In Sections \ref{sec:2vars} and \ref{sec:more} we establish that classes of models of satisfiable $\FLPC$-sentences always contain a ``nice'' structure in which
elements behave (in a sense that we will make clear) \textit{homogeneously}.
Utilising this behaviour we will show that the fluted fragment extended with periodic counting quantifiers
has a decidable satisfiability problem. Intriguingly, even though periodic counting quantifiers generalise standard counting quantifiers,
our methodology allows us to avoid \textit{Presburger quantification}, which was required to
establish decidability of satisfiability for $\FL$ with standard counting \cite{ph21}.

In section \ref{sec_undec} we show that the satisfiability problems for the fluted fragment with counting extensions become undecidable
when minimal syntactic relaxations are allowed. More precisely, the section will culminate with a result showing that
the finite satisfiability problem for the 3-variable adjacent fragment with counting is $\Sigma^0_1$-complete.
Additionally, the general satisfiability problem will be shown to be $\Pi^0_1$-complete when 4 variables are used,
and $\Sigma^1_1$-complete if periodic counting is allowed. Denoting the adjacent fragment as $\mathcal{AF}$, 
we provide a brief survey of complexity and undecidability standings in Table~\ref{table_complexity}.

The work in this paper is closely related to \cite{BenediktKT20} in which decidability of satisfiability is established
for the two-variable fragment with periodic counting (denoted $\mathcal{FO}^2_{\text{Pres}}$) but without a sharp complexity-theoretic bound.
Our homogeneity conditions, which stem from lack of inverse relations in fluted logics, allow us
to establish \textsc{NExpTime}-completeness for both the finite and general satisfiability problems of $\FLPC^2$.
\pagebreak

\section{Preliminaries} \label{sec:prereq}

We use $\mathbb{N}$ to denote the set of integers $\{ 0, 1, 2, \dots \}$, and
$\mathbb{N}^*$ to denote $\N$ along with the first infinite cardinal; i.e.
$\mathbb{N}^* = \mathbb{N} \cup \{ \aleph_0 \}$.
By picking some $n, p \in \mathbb{N}$
we write $n^{+p}$ for the \textit{linear set} $\{ n + ip \mid i \in \mathbb{N} \}$.
In the extended integers $\mathbb{N}^*$, the cardinal $\aleph_0$ is the maximum element under the canonical ordering ``$<$''
and
\begin{itemize}
    \item $0 \cdot \aleph_0 = \aleph_0 \cdot 0 = 0$;
    \item $n + \aleph_0 = \aleph_0 + n = \aleph_0$ for all $n \in \mathbb{N}^*$; and
    \item $n \cdot \aleph_0 = \aleph_0 \cdot n = \aleph_0$ for all $n \in \mathbb{N}^* \setminus \{ 0 \}$.
\end{itemize}
A \textit{linear Diophantine inequation} is an expression of the form
$
    a_1 v_1 + \dots + a_n v_n + b \bowtie c_1 v_1 + \dots + c_n v_n + d,
$
where $(a_i)_{i = 1}^{n}, b, (c_i)_{i = 1}^{n}, d$ are constant values taken form
$\mathbb{N}^*$, $\bar{v} = v_1, \dots v_n$ is a vector of variables, and ``$\bowtie$'' is any of the relations ``$=, \neq, \leq, <, \geq, >$''
(each interpreted as one would assume). 
It is known that when the cardinal $\aleph_0$ is disallowed, a solution for a set of such inequations may be found in \textsc{NPTime} \cite{ilp_nptime}.
The picture does not change when $\aleph_0$ is permitted as a solution and/or constant.
Indeed, we may reduce the problem of finding a solution over $\mathbb{N}^*$ to that of finding it over $\mathbb{N}$ as follows.
First guess which variables should be mapped to $\aleph_0$ and which should have a finite value.
Then, check that each inequation featuring a variable assigned $\aleph_0$ holds and discard them.
What will be left is a system of inequations with constants in $\mathbb{N}$ and in variables assumed to be finite.
See \cite[Ch~7.4]{PrattHartmann23} for greater detail.
We will allow systems of inequations to contain disjunctions.

By a word $\bar{a} \in A^n$ we mean a tuple $\bar{a} = a_1 \cdots a_n$, where $a_i \in A$ for each $i$, $1 \leq i \leq n$.
In case $n = 1$, we often write $a$ instead of $\bar{a}$.
By $\bar{a}^{-1}$ we mean the reversal of $\bar{a}$; i.e. $\bar{a}^{-1} = a_n \cdots a_1$.
If $\bar{a}$ and $\bar{b}$ are words we write $\bar{a}\bar{b}$ for the concatenation of the two.
We use the terms ``tuple'' and ``word'' interchangeably. 

Now, take some structure $\fA$ and an $i$-tuple $\bar{a}$ of elements from $A$.
Suppose $B = \{ b \in A \mid \fA, \bar{a}b \models \varphi \}$ for some first-order formula $\varphi(x_1, \dots, x_{i{+}1})$.
Fixing $n, p \in \mathbb{N}$
we extend the syntax of first-order logic
with \textit{(threshold) counting quantifiers} $\exists_{[\geq n]}$ and \textit{periodic counting quantifier} $\exists_{[n^{+p}]}$.
Semantically, $\fA, \bar{a} \models \exists_{[\geq n]} x_{i{+}1} \varphi$ if and only if $|B| \geq n$.
Similarly, $\fA, \bar{a} \models \exists_{[n^{+p}]} x_{i{+}1} \varphi$
if and only if $|B| \in n^{+p}$.
We refrain from further generalisation to \textit{ultimately periodic counting quantifier}
$\exists_{[n_1^{+p_1} \cup \dots \cup n_k^{+p_k}]}$ (as in \cite{BenediktKT20})
as they can be expressed as a disjunction
of formulas using periodic counting quantifiers $\exists_{[n_1^{+p_1}]} x_{i{+}1} \varphi \vee \dots \vee \exists_{[n_k^{+p_k}]} x_{i{+}1} \varphi$.
Thus, a sentence such as
``Every orchestra hires an even number of people to play first violin'' may be written in a language with periodic counting:
\begin{equation} \label{eq:prereq_example}
    \forall x_1 \Big( orch(x_1) \to \exists_{[0^{+2}]} x_2 \big( pers(x_2) \wedge
    \exists x_3 (1st\_viol(x_3) \wedge
     hires\_to\_play(x_1, x_2, x_3)
        )
    \big)
    \Big).
\end{equation}

Formally, the fluted fragment with periodic counting is the union of sets of formulas
$\FLPC^{[\ell]}$ defined by simultaneous induction as follows:
\begin{enumerate}[label=(\roman*)]
    \item any atom $r(x_k, \dots, x_\ell)$, where $x_k, \dots, x_\ell$ is a contiguous subsequence of $x_1, x_2, \dots$ and $r$
    is a predicate of arity $\ell {-} k {+} 1$, is in $\FLPC^{[\ell]}$;
    \item $\FLPC^{[\ell]}$ is closed under Boolean combinations;
    \item if $\varphi \in \FLPC^{[\ell{+}1]}$, then $\exists_{[n^{+p}]} x_{\ell{+}1} \varphi$ is in $\FLPC^{[\ell]}$
    for every $n, p \in \mathbb{N}$.
\end{enumerate}
We write $\FLPC = \bigcup_{\ell \geq 0} \FLPC^{[\ell]}$ for the set of all fluted formulas with periodic counting and define
the $\ell$-variable fluted fragment with periodic counting to be the set $\FLPC^\ell := \FLPC \cap \mathcal{FO}^\ell$.
(Here $\mathcal{FO}^\ell$ is the set of first-order formulas that do not use more than $\ell$ variables).
We will implicitly restrict attention to signatures $\sigma \cup \{ = \}$ which feature no function and/or constant symbols, and where
``$=$'' is always interpreted as the canonical equality relation. Lastly, we use $\forall x \varphi$ interchangeably with $\exists_{[0^{+0}]}x\;\neg \varphi$
whenever convenient. 

Variables in fluted logics convey no meaningful information.
Indeed, since the arity of every predicate in $\sigma \cup \{ = \}$ is fixed and each atom features a suffix of the variable quantification order,
we may employ what we call \textit{variable-free notation}. As an example, formulas (\ref{eq:intro_example}) and (\ref{eq:prereq_example})
may be (respectively) written as 
\begin{align}
    &\forall \Big(cond \rightarrow
    \exists \big(solo \wedge fav \wedge \mbox{}
    \forall (conc \to nom ) \big) \Big), \\
    &\forall \Big( orch \to
    \exists_{[0^{+2}]} \big( person \wedge \exists (1st\_viol \wedge
     hires\_to\_play )
        \big)
    \Big)
\end{align}
without ambiguity (up to a shift of variable indices).
Writing $\forall^\ell$ for $\forall x_1 \cdots \forall x_{\ell}$, we will employ variable-free notation extensively
throughout subsequent sections.

Fix a first-order formula $\varphi$ with periodic counting quantifiers. We
assume that numeric values are encoded in binary and write $||\varphi||$ for the number of symbols used in $\varphi$.
We point out that the signature of $\varphi$ (which we write as $\sig(\varphi)$ for brevity)
is no larger than~$||\varphi||$.

Now, let $\varphi$ be a formula in $\FLPC^{\ell{+}1}$. We say that $\varphi$ is in \textit{normal-form}
if it takes the following shape
\begin{equation} \label{eq:nmf}
    \bigwedge_{r \in R} \forall^\ell \Big( \alpha_r \to \exists_{[n_r^{+p_r}]} \gamma_r \Big) \wedge
    \bigwedge_{t \in T} \forall^\ell \Big( \beta_t \to \neg \exists_{[n_t^{+p_t}]} \delta_t \Big),
\end{equation}
where $R, T$ are sets of indices, each $\alpha_r, \beta_t$ is a quantifier-free formula in $\ell$ variables, each $\gamma_r, \delta_t$ is a
quantifier-free formula in $\ell{+}1$ variables, and each $n_r^{+p_r}, n_t^{+p_t}$ is a linear set.
Using standard rewriting techniques we prove the following in Appendix \ref{app:nmf}.
\begin{lemma} \label{lma:nmf}
    Suppose $\varphi$ is an $\FLPC^{\ell{+}1}$-sentence. Then, we may compute, in polynomial time, an equisatisfiable
    normal-form $\FLPC^{\ell{+}1}$-sentence $\psi$.
\end{lemma}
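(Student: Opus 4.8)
The plan is to convert an arbitrary $\FLPC^{\ell+1}$-sentence $\varphi$ into the required shape (\ref{eq:nmf}) by a sequence of standard, satisfiability-preserving rewriting steps, each of which introduces only fresh predicate symbols of bounded arity so that flutedness and the variable count are preserved. First I would put $\varphi$ into negation normal form, pushing negations inward past Boolean connectives and converting $\neg\exists_{[n^{+p}]}$ as it stands (the normal form already permits negated periodic quantifiers on the right-hand side). Next, I would eliminate nesting of quantifiers by the usual \emph{renaming} trick: whenever a counting quantifier $\exists_{[n^{+p}]} x_{k+1}\,\chi$ occurs as a subformula with $\chi$ itself containing quantifiers, introduce a fresh predicate $p_\chi$ of arity equal to the number of free variables of $\chi$ (which, by flutedness, form a contiguous suffix $x_j,\dots,x_k$), add the defining conjuncts $\forall^{k}\,(p_\chi \leftrightarrow \chi)$, and replace the occurrence of $\chi$ by the atom $p_\chi$. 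Iterating bottom-up, one arrives at a conjunction of sentences each of which is a block of universal quantifiers over a Boolean matrix containing at most one counting quantifier, applied to a quantifier-free formula.

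The remaining work is cosmetic bookkeeping. I would split each biconditional $p_\chi \leftrightarrow \chi$ into the two implications $p_\chi \to \chi$ and $\neg p_\chi \to \neg\chi$ (equivalently $\chi \to p_\chi$), and then massage each resulting conjunct of the form $\forall^\ell(\alpha \to \exists_{[n^{+p}]}\gamma)$ or its negated-quantifier counterpart so that the guard $\alpha$ is quantifier-free in $\ell$ variables and the body $\gamma$ quantifier-free in $\ell+1$ variables; padding with the trivially true guard $\top$ (or rather an arbitrary tautology over the variables) handles the degenerate cases, and using the convention $\forall x\,\varphi \equiv \exists_{[0^{+0}]}x\,\neg\varphi$ noted in the preliminaries absorbs the purely universal conjuncts into the first family of (\ref{eq:nmf}). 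Throughout, one checks that each rewriting step preserves the contiguous-suffix discipline on variables in every atom — this is immediate because the fresh predicates are assigned exactly the free-variable suffix of the subformula they abbreviate — and that no new variables beyond $x_1,\dots,x_{\ell+1}$ are introduced.

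For the complexity claim, I would observe that each structural subformula of $\varphi$ gives rise to at most one fresh predicate and a constant number of new conjuncts of size linear in $\|\varphi\|$, that the numeric parameters $n,p$ are simply copied (so the binary encoding does not blow up), and that the whole procedure can be carried out in a single bottom-up pass; hence the output $\psi$ has size polynomial in $\|\varphi\|$ and is computable in polynomial time. Equisatisfiability is the routine consequence of the fact that the fresh-predicate definitions can always be satisfied by reading off the truth values of the abbreviated subformulas in any model, and conversely any model of $\psi$ restricts to a model of $\varphi$. The only mildly delicate point — the part I would treat most carefully — is verifying that the renaming step never forces a predicate to take a \emph{non-suffix} tuple of variables as its arguments; this is exactly where flutedness of the input is used, and it is the reason the transformation stays inside $\FLPC^{\ell+1}$ rather than escaping to general first-order logic. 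The detailed verification of these points is carried out in Appendix~\ref{app:nmf}.
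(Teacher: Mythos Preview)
Your proposal is correct and uses essentially the same renaming technique as the paper. The paper's variant is marginally cleaner: it abbreviates the innermost \emph{quantified} subformula $\theta=\exists_{[n^{+p}]}\chi$ (where $\chi$ is already quantifier-free) by a fresh $k$-ary predicate $q$ and adjoins $\forall^\ell(q\to\theta)\wedge\forall^\ell(\neg q\to\neg\theta)$, which are immediately in the shape (\ref{eq:nmf}); iterating leaves only a propositional residue, so no further ``cosmetic bookkeeping'' is needed. One slip in your write-up: the free variables of the body $\chi$ under $\exists x_{k+1}$ form a suffix ending in $x_{k+1}$, not $x_k$, so $p_\chi$ must carry $x_{k+1}$ as its last argument (and the defining axiom must be $\forall^{k+1}$, not $\forall^{k}$), or else the replacement $\exists x_{k+1}\,p_\chi$ becomes vacuous.
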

%
Notice that the negation before the periodic counting quantifier in the second conjunct of (\ref{eq:nmf}) is not moved-inwards.
This is done deliberately so as to avoid computing complements of linear sets, which may be of exponential size as a function of $\sizeof{\varphi}$.

Now, let $\fA$ be a structure over a finite signature $\sigma \cup {\{=\}}$.
We say that an atom is $(\ell{+}1)$-fluted if it features a suffix of $x_1, \dots, x_{\ell{+}1}$ as arguments.
The fluted atomic $(\ell{+}1)$-type $\tau$ of $a\bar{b}c \in A^{\ell{+}1}$
(denoted $\ftp^\fA_{\ell{+}1}(a\bar{b}c)$) is then
the set of $(\ell{+}1)$-fluted (possibly negated) atoms over $\sigma \cup \{ = \}$ with arity no greater than
$\ell{+}1$ that are satisfied by $a\bar{b}c$ in~$\fA$.
We invite the reader to view the tuple $a\bar{b}$ as \textit{emitting} $\tau$, and $\bar{b}c$ as \textit{absorbing} $\tau$.
Write $\tau \restriction_{[2, \ell{+}1]}$ for the fluted $\ell$-type $\pi$ obtained by
deleting entries in $\tau$ of arity greater than $\ell$ and decrementing variable indices by 1.
We say that a fluted $\ell$-type $\pi$ is an endpoint of a fluted $(\ell{+}1)$-type $\tau$ if $\tau \restriction_{[2, \ell{+}1]} = \pi$.
We define $\FTP^\sigma_{\ell{+}1}$ to be the set of all fluted $(\ell{+}1)$-types over $\sigma \cup \{=\}$.

The fluted $\ell$-profile of $a\bar{b} \in A^\ell$ (denoted $\fpr^\fA_\ell(a\bar{b})$) is a function $\rho$ mapping $\tau \in \FTP^\sigma_{\ell{+}1}$
to the number of times $a\bar{b}$ emits $\tau$. Formally,
$
    \rho(\tau) = | \{ c \in A \mid \ftp^\fA_{i{+}1}(a\bar{b}c) = \tau \} |.
$
If $\varphi$ is a quantifier-free $\FLPC^{\ell{+}1}$-formula, we write
$\rho \models \exists_{[n^{+p}]} \varphi$ just in case $\sum_{\tau \in \FTP^\sigma_{\ell{+}1}}^{\tau \models \varphi} \rho(\tau) \in n^{+p}$.
Clearly, $\rho \models \exists_{[n^{+p}]} \varphi$ if and only if $\fA, a\bar{b} \models \exists_{[n^{+p}]} \varphi$.

In the sequel we will assume that all structures are countable. This comes with no loss of generality as $\FLPC$
is subsumed by the countable fragment of infinitary logic $\mathcal{L}_{\omega_1, \omega}$; a language for which 
the downward L\"{o}wenheim-Skolem Theorem holds (folklore, see \cite[p.~69]{infi_logic}).
Note that, in $\FLPC$, the finite model property fails as $\neg \exists_{[0^{{+}1}]} \top$ is an axiom of~infinity.
\section{The Two-Variable Sub-fragment} \label{sec:2vars}

In this section we restrict attention to the two-variable fluted fragment with periodic counting.
To achieve decidability of (finite) satisfiability we first specify what kind of ``nice''
models we will be looking for.
Take any $\sigma$-structure $\fA$ and $\pi \in \FTP^\sigma_1$.
For convenience, we write $A_\pi$ for the set of all elements $a \in A$ with $\ftp^\fA(a) = \pi$.
We say that $\pi$ is \textit{globally homogeneous} in $\fA$ if $\fpr^\fA(a) = \fpr^\fA(b)$ for each $a, b \in A_\pi$.
That is to say, $\pi$ is globally homogeneous in $\fA$ if, for each $\tau \in \FTP^\sigma_2$, the number of fluted 2-types $\tau$ emitted 
is the same for each element in $A_\pi$.
The structure $\fA$ is \textit{globally homogeneous} if each $\pi \in \FTP^\sigma_1$ is globally homogeneous in $\fA$.

For the remainder of the section fix some normal-form $\FLPC^2$-sentence $\varphi$.
We claim that if $\varphi$ is a satisfiable, then it is satisfiable in a globally homogeneous model.
To see this, take some structure $\fA \models \varphi$ and a pair of elements $cd \in A^2$ that realised the fluted 2-type $\tau$ in $\fA$.
Since $\tau$ consists of fluted formulas, it does not feature atoms of the form $p(x_1)$ and $p(x_2, x_1)$.
Referencing $\tau$ only, we lack information to deduce what formulas are satisfied by the pair $dc$ in $\fA$.
On the other hand, $\fA, cd \models \psi$ if and only if $\tau \models \psi$ for each quantifier-free $\FLPC^2$-formula $\psi$.
Thus, if we were to in any way alter the fluted 2-type of $dc$ in $\fA$, the set of quantifier-free $\FLPC^2$-formulas satisfied by $cd$ in $\fA$
would not change.

Taking a step back, take any $\pi \in \FTP^\sigma_1$ and recall that $A_\pi \subseteq A$ is the set of all elements realising the 1-type $\pi$ in $\fA$.
We redefine 2-types emitted by elements of $A_\pi$ in $\fA$ in such a way that makes $\pi$ globally homogeneous in the rewiring of $\fA$.
%
Let us fix any $a \in A_\pi$ and write $\rho = \fpr^\fA(a)$. The element $a$ and profile $\rho$ picked will serve as an ``example''
of how the rest of $A_\pi$ should form fluted 2-types with other elements of the model.
Taking any $b \in A_\pi \setminus \{ a \}$ and $c \in A \setminus \{ a, b \}$ we see that it is impossible to prohibit
$b$ from emitting the fluted 2-type $\ftp^\fA(ac)$ to $c$ via any normal-form $\FLPC^2$-formula.
We thus allow $b$ to impersonate $a$ in $\fA$ by rewiring the fluted 2-type $\ftp^\fA(bc)$ to be $\ftp^\fA(ac)$
for each $c \in A \setminus \{ a, b \}$ and, additionally, by resetting $\ftp^\fA(ba)$ to be the 2-type $\ftp^\fA(ab)$ and $\ftp^\fA(bb)$ to be $\ftp^\fA(aa)$.
Clearly, only fluted 2-types emitted by $b$ were reconsidered in this procedure, thus pairs in $(A \setminus \{ b \}) \times A$
satisfy the same quantifier-free $\FLPC^2$-formulas as before.
To see that $\fA$ still models $\varphi$ we need only show that $b$ does not violate $\alpha_r \to \exists_{[n_r^{+p_r}]} \gamma_r$
and $\beta_t \to \neg \exists_{[n_t^{+p_t}]} \delta_t$ for each $r \in R$ and $t \in T$.
By our rewiring procedure, we have that $\fpr^\fA(a) = \rho = \fpr^\fA(b)$.
Thus,
\begin{equation*}
    \fA, a \models \exists_{[n^{+p}]} \psi \iff \rho \models \exists_{[n^{+p}]} \psi \iff \fA, b \models \exists_{[n^{+p}]} \psi
        \qquad \text{for quantifier-free $\psi \in \FLPC^2$.}
\end{equation*}
Having already argued that $\fA, a \models \alpha_r \to \exists_{[n_r^{+p_r}]} \gamma_r$ and 
$\fA, a \models \beta_t \to \neg \exists_{[n_t^{+p_t}]} \delta_t$ for each $r \in R$ and $t \in T$ we therefore conclude that
$\fA \models \varphi$.

Since the only 2-types redefined are those emitted by $b$, we can run this construction in parallel
for each element in $A_\pi \setminus \{ a \}$.
Clearly, this renders $\pi$ globally homogeneous in the rewired model.
Since elements in $A \setminus A_\pi$ are left untouched by our rewiring, 
we can repeat this procedure for each $\pi \in \FTP^\sigma_1$
thus proving the following:
\begin{lemma} \label{cor:glob_homo}
    Suppose $\varphi$ is a satisfiable normal-form $\FLPC^2$-sentence. Then, $\varphi$ is satisfiable in a globally homogeneous model.
\end{lemma}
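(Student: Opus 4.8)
The plan is to take an arbitrary model $\fA \models \varphi$ and repair it, one $1$-type at a time, into a globally homogeneous model while preserving satisfaction of $\varphi$. Fix $\pi \in \FTP^\sigma_1$ with $A_\pi \neq \emptyset$ and choose a ``template'' element $a \in A_\pi$; let $\rho = \fpr^\fA(a)$. The goal is to force every other $b \in A_\pi$ to emit exactly the same multiset of fluted $2$-types as $a$. I would achieve this by rewiring $\fA$ so that $\ftp^\fA(bc) := \ftp^\fA(ac)$ for every $c \in A \setminus \{a, b\}$, with the extra stipulations $\ftp^\fA(ba) := \ftp^\fA(ab)$ and $\ftp^\fA(bb) := \ftp^\fA(aa)$. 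The key structural observation is that this only ever overwrites fluted $2$-types of tuples whose \emph{first} component is some $b \in A_\pi \setminus \{a\}$; nothing emitted by $a$, and nothing emitted by an element outside $A_\pi$, is touched.

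Next I would verify that the rewiring does not change the quantifier-free $\FLPC^2$-theory of any pair $cd$ with $c \neq b$. If $d \neq b$ this is immediate since $\ftp^\fA(cd)$ is untouched; if $d = b$ the point is that fluted $2$-types contain no atom of the form $p(x_1)$ or $p(x_2, x_1)$, so whatever $c$ can ``see'' about $b$ through a quantifier-free fluted formula does not depend on $\ftp^\fA(bc)$ at all. Consequently the only conjuncts of $\varphi$ whose truth could conceivably be disturbed are those instantiated with some rewired $b$ as leading element, i.e. $\alpha_r \to \exists_{[n_r^{+p_r}]} \gamma_r$ and $\beta_t \to \neg \exists_{[n_t^{+p_t}]} \delta_t$ evaluated at $b$.

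To close this gap I would use the profile: after the rewiring $\fpr^\fA(b) = \rho = \fpr^\fA(a)$ by construction, and whether $\fA, e \models \exists_{[n^{+p}]} \psi$ for quantifier-free $\psi$ depends only on $\fpr^\fA(e)$. Since $\fA \models \varphi$ before the rewiring already implies that $a$ satisfies each relevant conjunct, the same conjuncts now hold at $b$, hence $\fA \models \varphi$ is preserved. Because the repairs for distinct $b, b'$ overwrite disjoint sets of $2$-types (those emitted by $b$, respectively $b'$) and never overwrite anything emitted by $a$, they can all be run simultaneously, which makes $\pi$ globally homogeneous; and since $A \setminus A_\pi$ is left intact, iterating over all $\pi \in \FTP^\sigma_1$ yields the claim.

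The step I expect to be the main obstacle is making the parallel rewiring genuinely well-defined: when I set $\ftp^\fA(bc) := \ftp^\fA(ac)$ for $c \in A_\pi \setminus \{a,b\}$, the element $c$ is itself being rewired, so I must be careful that the copied value $\ftp^\fA(ac)$ is the one from the \emph{original} structure rather than from a partially modified one. The clean fix is to define the new structure in one shot as a function of the old: for every $b \in A_\pi \setminus \{a\}$ its outgoing $2$-types are dictated by $a$'s original outgoing $2$-types, while everything emitted by $a$ or by elements of $A \setminus A_\pi$ is copied verbatim; then one checks all conjuncts of $\varphi$ against this single object. The ``reversal'' remark — that a fluted $2$-type carries no information about the reverse pair — is the load-bearing observation, and I would isolate it explicitly before running the construction.
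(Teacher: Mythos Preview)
Your proposal is correct and follows essentially the same route as the paper: pick a template $a \in A_\pi$, copy its outgoing fluted $2$-types onto every other $b \in A_\pi$ (with the two special cases $\ftp(ba) := \ftp(ab)$ and $\ftp(bb) := \ftp(aa)$), use the ``no $p(x_1)$, no $p(x_2,x_1)$'' observation to argue that only the profile of the rewired $b$'s can change, and conclude via $\fpr(b) = \fpr(a)$ that all normal-form conjuncts survive. Your explicit remark that the parallel rewiring should be defined in one shot from the \emph{original} structure is a point the paper leaves implicit, but otherwise the arguments coincide.
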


In globally homogeneous structures elements realising the same fluted 1-type are, in a sense, stripped away of their individuality
as they all realise the same fluted 1-profile. When the globally homogeneous structure $\fA$ is clear from context,
we can unambiguously write $\rho_\pi$ for the fluted 1-profile
realised by each element of $A_\pi$ (for $\pi \in \FTP^\sigma_1$).

When considering the (finite) satisfiability problem for normal-form $\FLPC^2$-sentences such as $\varphi$,
we will confine ourselves to the search of globally homogeneous models. 
More precisely, we will produce a system of linear Diophantine inequations $\Psi$ that has a solution over $\N^*$ if and only if
$\varphi$ is satisfiable in a globally homogeneous model.
For this purpose, let
$(x_\pi)_{\pi \in \FTP^\sigma_1}$,
$(y_{\pi, \tau})_{\pi \in \FTP^\sigma_1}^{\tau \in \FTP^\sigma_2}$,
$(i_{\pi, r})_{\pi \in \FTP^\sigma_1}^{r \in R}$, and
$(j_{\pi, t})_{\pi \in \FTP^\sigma_1}^{t \in T}$
be sequences of variables.
Intuitively, the value assigned to $x_\pi$ will represent the number of elements realising the fluted 1-type~$\pi$;
$y_{\pi, \tau}$ -- the number times the 2-type $\tau$ is emitted by an element realising $\pi$; and
with $i_{\pi, r}$ and $j_{\pi, t}$ acting as periodic counters for elements realising $\pi$ when considering linear sets $n_r^{+p_r}$ and~$n_t^{+p_t}$.
To be more precise, when given a satisfying assignment for $\Psi$, we will build a globally homogeneous $\fA \models \varphi$ with
\begin{equation}\label{FLPC2_assignments_2}
    |A_\pi| = x_\pi \text{ and } \rho_\pi(\tau) = y_{\pi, \tau} \qquad\qquad\qquad\qquad\qquad \text{ for each $\pi \in \FTP^\sigma_1$ and $\tau \in \FTP^\sigma_2$.}
\end{equation}
On the other hand, 
we will have that any globally homogeneous model $\fA \models \varphi$ gives rise to a solution
of $\Psi$ with the following assignments for all $\pi \in \FTP^\sigma_1$, $\tau \in \FTP^\sigma_2$, $r \in R$ and $t \in T$:\footnote{
    In case $p_r$ (resp. $p_t$) is $0$, we allow $i_{\pi, r}$ (resp. $j_{\pi, t}$) to take any integer value.}
\begin{equation}\label{FLPC2_assignments}
    x_\pi {:=} |A_\pi|; \;
    y_{\pi, \tau} {:=} \rho_\pi(\tau); \;
    i_{\pi, r} {:=} \Big( \sum_{\tau' \in \FTP^\sigma_2}^{\tau' \models \gamma_r} \hspace{-2mm} \rho_\pi(\tau') - n_r \Big) / p_r; \;
    j_{\pi, t} {:=} \bigg\lfloor \Big( \sum_{\tau' \in \FTP^\sigma_2}^{\tau' \models \delta_t} \hspace{-2mm} \rho_\pi(\tau') - n_t \Big) / p_t \bigg\rfloor.
\end{equation}
We will proceed first by showing that the latter assignment satisfies our (yet to be defined) system of inequations $\Psi := \Psi_1 \cup \cdots \cup \Psi_6$.

When considering any model $\fA \models \varphi$ we have that the domain $A = \bigcup_{\pi \in \FTP^\sigma_1} A_\pi$ is non-empty.
The following is thus trivially satisfied by the assignment $x_\pi := A_\pi$:
\begin{equation}
    \Big\{ \sum_{\pi \in \FTP_1} x_\pi \geq 1 \Big\}. \tag{$\Psi_1$}\label{eq_FLPC2_Psi1}
\end{equation}
Additionally, picking any element $a \in A_\pi$ (for any $\pi \in \FTP_1^\sigma$)
and picking any $\pi' \in \FTP^\sigma_{\pi'}$, we have that the number of fluted 2-types emitted by $a$ to $A_{\pi'}$
must be $|A_{\pi'}| = x_{\pi'}$.
Assuming that $\fA$ is globally homogeneous, we may fixate on the fact
that the shared profile $\rho_\pi$ of $\pi$ has exactly $|A_{\pi'}|$ witnesses for fluted 2-types with the endpoint $\pi'$, or, more formally,
$\sum_{\tau \in \FTP^\sigma_2}^{\tau \restriction_{[2,2]} = \pi'} \rho(\tau) = |A_{\pi'}|$.
Thus, the assignment $y_{\pi, \tau} := \rho_\pi(\tau)$ satisfies the following set of inequations:
\begin{equation*}\tag{$\Psi_2$}\label{eq_FLPC2_Psi2}
    \Big\{
        x_\pi \neq 0 \to
        \sum_{\tau \in \FTP^\sigma_2}^{\tau \restriction_{[2,2]} = \pi'} y_{\pi, \tau} = x_{\pi'} \; \Big\vert \;
        \pi, \pi' \in \FTP^\sigma_1
    \Big\}.
\end{equation*}
Of course, under the supposition that $\pi \models \alpha_r$ for some $r \in R$ and $\pi \in \FTP^\sigma_1$ such that $|A_\pi| \geq 1$, we have that
$\rho_\pi \models \exists_{[n_r^{+p_r}]} \gamma_r$. Clearly, $k := \sum_{\tau \in \FTP^\sigma_2}^{\tau \models \gamma_r} \rho_\pi(\tau)$
must be a member of the linear set $n_r^{+p_r}$. Thus, there is a period counter $i_{\pi, r} \in \N$ such that
$k = n_r {+} i_{\pi, r} p_r$.
Turning the equation around we get that
$i_{\pi, r} = (k {-} n_r) / p_r$. Recalling that $\rho_\pi(\tau) = y_{\pi, \tau}$ for all $\tau \in \FTP^\sigma_2$, we have that the following is satisfied by our assignments:
\begin{equation}
    \Big\{
        x_\pi \neq 0 \to
        \sum_{\tau \in \FTP^\sigma_2}^{\tau \models \gamma_r} y_{ \pi, \tau } = n_r {+} i_{\pi, r} p_r \; \Big\vert \;
        r \in R, \pi \in \FTP^\sigma_1 \text{ s.t. } \pi \models \alpha_r
    \Big\} .\tag{$\Psi_3$}\label{eq_FLPC2_Psi3}
\end{equation}
On the other hand, supposing $\pi \models \beta_t$ for some $t \in T$ and with $\pi \in \FTP^\sigma_1$ such that $|A_\pi| \geq 1$,
we have that $\rho_\pi \not\models \exists_{[n_t^{+p_t}]} \delta_r$, thus leaving
$k := \sum_{\tau \in \FTP^\sigma_2}^{\tau \models \gamma_r} \rho_\pi(\tau)$ outside of the linear set $n_t^{+p_t}$.
Notice that this happens when the following conditions are met:
\begin{enumerate}
    \item $k < n_t$; or \label{item:non_incl_1}
    \item $p_t \neq 0$ and $k > m$ for each $m \in n_t^{+p_t}$ (which only happens when $k = \aleph_0$); or \label{item:non_incl_2}
    \item $p_t = 0$ and $k > n_t$; or \label{item:non_incl_3}
    \item for some $m \in n_t^{+p_t}$ we have $m < k < m + p_t$.\label{item:non_incl_4}
\end{enumerate}
Note that the listed conditions are exhaustive.
We translate the requirements \ref{item:non_incl_1}--\ref{item:non_incl_4}
into four functions $\Theta_1, \dots, \Theta_4$ which map fluted 1-types paired with indices in $T$
to linear equations. The functions are then defined as follows:
\begin{itemize}
    \item $\Theta_1(\pi, t) := \sum_{\tau \in \FTP^\sigma_2}^{\tau \models \delta_t} y_{ \pi, \tau } < n_t$,
    \item $\Theta_2(\pi, t) := \sum_{\tau \in \FTP^\sigma_2}^{\tau \models \delta_t} y_{ \pi, \tau } = \aleph_0$,
    \item $\Theta_3(\pi, t) := (p_t = 0) \wedge (n_t < \sum_{\tau \in \FTP^\sigma_2}^{\tau \models \delta_t} y_{ \pi, \tau })$,
    \item $\Theta_4(\pi, t) := n_t + j_{\pi, t} p_t < \sum_{\tau \in \FTP^\sigma_2}^{\tau \models \delta_t} y_{ \pi, \tau } < n_t + (j_{\pi, t} + 1) p_t$.
\end{itemize}
Since $k$ adheres to at least one of the four conditions, we write the following clauses for eligible fluted $1$-types:
\begin{equation}
    \Big\{
        x_\pi \neq 0 \to \bigvee_{i \in [1, 4]} \Theta_i(\pi, t)
        \; \Big\vert \; t \in T \text{ and } \pi \in \FTP^\sigma_1 \text{ such that } \pi \models \beta_t
    \Big\}.\tag{$\Psi_4$}\label{eq_FLPC2_Psi4}
\end{equation}
To verify that $j_{\pi, \tau} = \lfloor( k - n_t ) / p_t \rfloor$ is indeed a satisfying assignment for $\Psi_4$,
we need only consider case \ref{item:non_incl_4}. For this assume that $m < k < m + p_t$ for some $m \in n_t^{+p_t}$.
We can thus write $m = n_t + j p_t$.
Since $j_{\pi, \tau} = \lfloor( k - n_t ) / p_t \rfloor$, we conclude that $j_{\pi, \tau} = j$
thus satisfying $\Theta_4(\pi, t)$.

Reflecting on the semantics of the equality predicate, we see that for any given $\pi \in \FTP^\sigma_1$ in any globally homogeneous model $\fA \models \varphi$
there is exactly one $\tau \in \FTP^\sigma_2$ such that $\tau$ features the non-negated equality predicate and
$\rho_\pi(\tau) \neq 0$. More precisely, $\rho_\pi(\tau) = 1$ and the endpoint of $\tau$ is $\pi$.
We have that our assignment respects this condition and thus satisfies the following inequations:
\begin{equation}
    \Big\{
        y_{ \pi, \tau } = 0 \; \Big\vert \;
        \pi \in \FTP^\sigma_1, \; \tau \in \FTP^\sigma_2 \text{ s.t. } {=} \in \tau, \tau \restriction_{[2,2]} \neq \pi
    \Big\} \cup
    \Big\{
        \hspace{-2mm} \sum_{\tau \in \FTP^\sigma_2}^{{=} \in \tau} \hspace{-2mm} y_{ \pi, \tau } = 1
        \; \Big\vert \;
        \pi \in \FTP^\sigma_1
    \Big\}. \tag{$\Psi_5$}\label{eq_FLPC2_Psi5}
\end{equation}
Finally, we forbid the periodic counters
$(i_{\pi, r})_{\pi \in \FTP^\sigma_1}^{r \in R}$ and $(j_{\pi, t})_{\pi \in \FTP^\sigma_1}^{t \in R}$
from taking the value~$\aleph_0$:
\begin{equation}
    \Big\{
        i_{ \pi, r }, j_{ \pi, t } < \aleph_0 \; \Big\vert \;
        \pi \in \FTP^\sigma_1 \text{ and }  r \in R , t \in T
    \Big\}. \tag{$\Psi_6$}\label{eq_FLPC2_Psi6}
\end{equation}  

Putting everything together, we have engineered a system of equations $\Psi = \Psi_1 \cup \cdots \cup \Psi_6$
that is satisfied by extracting relevant cardinalities (see (\ref{FLPC2_assignments})) from homogeneous models of $\varphi$.
\begin{lemma}\label{flpc2_reduct_dir_1}
    Suppose $\fA \models \varphi$ is a globally homogeneous model. Then, $\Psi$ is satisfiable.
\end{lemma}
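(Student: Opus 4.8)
The plan is to take the given model $\fA$, read off the assignment displayed in \eqref{FLPC2_assignments}, and verify that it solves each of the six clause families $\Psi_1,\dots,\Psi_6$ in turn. Almost all of this has already been carried out in the discussion preceding the lemma, so the proof is mostly a matter of assembling those observations and attending to the degenerate cases. First I would check that the assignment is well defined: since $\fA$ is globally homogeneous, the profile $\rho_\pi$ is independent of the choice of representative of $A_\pi$, so $y_{\pi,\tau} := \rho_\pi(\tau)$ makes sense; and because structures may be countably infinite, each $x_\pi$ and $y_{\pi,\tau}$ is permitted to equal $\aleph_0$, which is harmless as $\Psi$ ranges over $\N^*$. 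For the period counters I would use the formulas in \eqref{FLPC2_assignments} whenever $p_r$ (respectively $p_t$) is non-zero and the sum occurring there is finite, and set $i_{\pi,r}$ (respectively $j_{\pi,t}$) to $0$ otherwise, which is consistent with the footnote convention in the $p=0$ case.

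Next I would dispatch the ``structural'' clauses. Constraint \eqref{eq_FLPC2_Psi1} holds because $A=\bigcup_{\pi}A_\pi$ is non-empty. For \eqref{eq_FLPC2_Psi2}, fix $\pi$ with $|A_\pi|\geq1$ and pick $a\in A_\pi$: every $c\in A$ contributes exactly one emitted $2$-type $\ftp^\fA(ac)$, whose endpoint $\ftp^\fA(ac)\restriction_{[2,2]}$ equals $\ftp^\fA(c)$, so grouping the elements of $A$ by $1$-type gives $\sum_{\tau\restriction_{[2,2]}=\pi'}\rho_\pi(\tau)=|A_{\pi'}|=x_{\pi'}$. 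For \eqref{eq_FLPC2_Psi5} I would invoke the semantics of equality: among the $2$-types emitted by elements of $A_\pi$, exactly one contains the non-negated symbol $=$, namely the one emitted by the pair $aa$, and its endpoint is $\pi$; hence the sum of the relevant $y_{\pi,\tau}$ is $1$ while every ``off-diagonal'' $y_{\pi,\tau}$ with ${=}\in\tau$ vanishes.

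Then come the periodic-counter clauses. For \eqref{eq_FLPC2_Psi3}: if $\pi\models\alpha_r$ and $|A_\pi|\geq1$, then any $a\in A_\pi$ satisfies $\exists_{[n_r^{+p_r}]}\gamma_r$, so $\rho_\pi\models\exists_{[n_r^{+p_r}]}\gamma_r$ and $k:=\sum_{\tau\models\gamma_r}\rho_\pi(\tau)\in n_r^{+p_r}$; since every linear set is a subset of $\N$, the value $k$ is finite and equals $n_r+i_{\pi,r}p_r$ for the chosen $i_{\pi,r}$ (when $p_r=0$ we have $k=n_r$ and any integer $i_{\pi,r}$ works). For \eqref{eq_FLPC2_Psi4}: if $\pi\models\beta_t$ and $|A_\pi|\geq1$, then $\rho_\pi\not\models\exists_{[n_t^{+p_t}]}\delta_t$, so $k:=\sum_{\tau\models\delta_t}\rho_\pi(\tau)$ lies outside $n_t^{+p_t}$, and by the exhaustive case split \ref{item:non_incl_1}--\ref{item:non_incl_4} at least one of $\Theta_1(\pi,t),\dots,\Theta_4(\pi,t)$ holds; cases \ref{item:non_incl_1}--\ref{item:non_incl_3} mention no $j$-variable, while in case \ref{item:non_incl_4} the value $k$ is squeezed strictly between two consecutive elements of the linear set, hence is finite, and $j_{\pi,t}=\lfloor(k-n_t)/p_t\rfloor$ pins down the right index exactly as argued before the lemma. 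Finally, \eqref{eq_FLPC2_Psi6} holds since in every case where a counter is governed by its defining formula the underlying sum was shown to be finite, and in the remaining cases the counter was set to $0$.

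I expect the only genuine friction to be this last point: a sum such as $\sum_{\tau\models\gamma_r}\rho_\pi(\tau)$ could, a priori, equal $\aleph_0$, which would make the ``formula value'' of a counter equal to $\aleph_0$ and violate \eqref{eq_FLPC2_Psi6}. The resolution is that this can occur only when $\pi\models\alpha_r$, where membership in a linear set forces finiteness, or when $\pi\models\beta_t$ and case \ref{item:non_incl_2} applies, where no $j$-variable is constrained at all, so one may harmlessly set the counter to $0$. Spelling out these edge cases, together with the $p=0$ convention from the footnote, is essentially all that the proof adds to the preceding discussion.
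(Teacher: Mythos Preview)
Your proposal is correct and follows exactly the approach the paper takes: the paper's ``proof'' of this lemma is simply the discussion preceding it, which defines the assignment \eqref{FLPC2_assignments} and verifies each of $\Psi_1,\dots,\Psi_6$ in turn, so your write-up is a faithful (and slightly more careful, regarding the $p=0$ and $k=\aleph_0$ edge cases) unpacking of that argument.
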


We now move on to the converse direction:

\begin{lemma}\label{flpc2_reduct_dir_2}
    Suppose $\Psi$ is satisfiable. Then, there is a globally homogeneous model $\fA \models \varphi$.
\end{lemma}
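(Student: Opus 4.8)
The plan is to reverse the cardinality extraction of Lemma~\ref{flpc2_reduct_dir_1}. Given a solution of $\Psi$ over $\N^*$ -- that is, values for every variable $x_\pi$, $y_{\pi,\tau}$, $i_{\pi,r}$, $j_{\pi,t}$ -- I will assemble a countable globally homogeneous structure $\fA$ realising the cardinalities recorded in~(\ref{FLPC2_assignments_2}) and then verify $\fA \models \varphi$. For the domain I introduce, for each $\pi \in \FTP^\sigma_1$, a fresh set $A_\pi$ of $x_\pi$ elements (countably infinitely many if $x_\pi = \aleph_0$), put $A = \bigcup_\pi A_\pi$, and declare $\ftp^\fA(a) := \pi$ for $a \in A_\pi$. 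By $\Psi_1$ the domain is non-empty, and since $\FTP^\sigma_1$ is finite with each $x_\pi \in \N^*$ the domain is countable, in line with our standing convention on structures.

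For the binary diagram, fix $\pi$ with $x_\pi \ne 0$ and $a \in A_\pi$. By $\Psi_5$ there is a unique $\tau^\ast \in \FTP^\sigma_2$ carrying the equality atom with $y_{\pi,\tau^\ast} = 1$, and its endpoint is $\pi$; I set $\ftp^\fA(aa) := \tau^\ast$. By $\Psi_2$, for every $\pi' \in \FTP^\sigma_1$ the $2$-types with endpoint $\pi'$, counted with multiplicities $y_{\pi,\tau}$, total $x_{\pi'}$; discarding the single copy of $\tau^\ast$ (present when $\pi' = \pi$) leaves $x_\pi - 1$. Using standard cardinal arithmetic over $\N^*$, which is legitimate because $\FTP^\sigma_2$ is finite, I partition $A_\pi \setminus \{a\}$ (for $\pi' = \pi$), respectively $A_{\pi'}$ (for $\pi' \ne \pi$), into blocks indexed by those $2$-types, the $\tau$-block having size $y_{\pi,\tau}$, and set $\ftp^\fA(ab) := \tau$ for every $b$ in the $\tau$-block. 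Crucially -- as already observed before Lemma~\ref{cor:glob_homo} -- a fluted $2$-type constrains only the $1$-type of its second argument (its endpoint) together with the binary and equality atoms on the pair, and there is no ``inverse'' link between $\ftp^\fA(ab)$ and $\ftp^\fA(ba)$; hence the assignments above can be made independently around each $a$ and jointly define a genuine $\sigma$-structure $\fA$ in which every $a \in A_\pi$ has $\fpr^\fA(a) = (\tau \mapsto y_{\pi,\tau})$. In particular $\fA$ is globally homogeneous.

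To see $\fA \models \varphi$, fix $a \in A_\pi$ (so $x_\pi \ne 0$). For quantifier-free $\psi \in \FLPC^2$ we have $\fA, ab \models \psi$ iff $\ftp^\fA(ab) \models \psi$, so $|\{b \in A : \fA, ab \models \psi\}| = \sum_{\tau \models \psi} y_{\pi,\tau}$, the sum ranging over $\FTP^\sigma_2$. If $\pi \models \alpha_r$, then $\Psi_3$ gives $\sum_{\tau \models \gamma_r} y_{\pi,\tau} = n_r + i_{\pi,r}p_r$, and $\Psi_6$ forces $i_{\pi,r} \in \N$, so this count lies in $n_r^{+p_r}$ and $\fA, a \models \exists_{[n_r^{+p_r}]}\gamma_r$. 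If $\pi \models \beta_t$, set $k := \sum_{\tau \models \delta_t} y_{\pi,\tau}$; by $\Psi_4$ one of $\Theta_1(\pi,t), \dots, \Theta_4(\pi,t)$ holds, and I check case by case that each forces $k \notin n_t^{+p_t}$: $\Theta_1$ gives $k < n_t$, whereas every member of $n_t^{+p_t}$ is at least $n_t$; $\Theta_2$ gives $k = \aleph_0$, whereas $n_t^{+p_t} \subseteq \N$; $\Theta_3$ gives $p_t = 0$ and $k > n_t$, so $k \notin \{n_t\} = n_t^{+p_t}$; and $\Theta_4$ (necessarily with $p_t \ne 0$) places $k$ strictly between the consecutive members $n_t + j_{\pi,t}p_t$ and $n_t + (j_{\pi,t}+1)p_t$ of $n_t^{+p_t}$, where once more $\Psi_6$ is needed to guarantee $j_{\pi,t} \in \N$ so that these really are members of the linear set. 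Hence $\fA, a \not\models \exists_{[n_t^{+p_t}]}\delta_t$. As $a$ ranged over $A$, we conclude $\fA \models \varphi$.

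I expect the only genuinely delicate point to be the soundness of the disjunctive test $\Psi_4$: one must confirm that $\Theta_1, \dots, \Theta_4$ are not merely jointly exhaustive (which is all that Lemma~\ref{flpc2_reduct_dir_1} used) but each individually sufficient to keep the witness count $k$ outside $n_t^{+p_t}$, and that the finiteness clauses $\Psi_6$ are precisely what exclude a spurious solution in which some counter $i_{\pi,r}$ or $j_{\pi,t}$ takes the value $\aleph_0$ and satisfies $\Psi_3$ or $\Theta_4$ only vacuously. A secondary technical nuisance is the cardinal bookkeeping in the binary-diagram step when some $x_\pi$ or $y_{\pi,\tau}$ equals $\aleph_0$; this is absorbed by the finiteness of $\FTP^\sigma_2$ together with the absence of inverse relations, which decouples the $2$-type assignments around distinct domain elements.
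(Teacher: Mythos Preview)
Your proof is correct and follows essentially the same approach as the paper: build the domain as a disjoint union of sets $A_\pi$ of size $x_\pi$, use $\Psi_2$ and $\Psi_5$ to distribute fluted $2$-types around each element so that every $a \in A_\pi$ realises the profile $\tau \mapsto y_{\pi,\tau}$, and then verify the normal-form conjuncts via $\Psi_3$, $\Psi_4$ and $\Psi_6$. If anything, your treatment is slightly more explicit than the paper's about the cardinal bookkeeping when some $x_\pi$ or $y_{\pi,\tau}$ equals $\aleph_0$ and about why each disjunct $\Theta_i$ individually suffices to exclude $k$ from $n_t^{+p_t}$.
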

\begin{proof}
    Suppose that $\Psi$ has a satisfying assignment.
    To avoid notational clutter, we identify the solution vectors of variables in $\Psi$ as themselves.
    We will build a globally homogeneous model $\fA$ over the domain
    $
        A = \bigcup_{\pi \in \FTP^\sigma_1} A'_\pi, \text{ where } A'_\pi = \{ a_{\pi, i} \mid i \in [1, x_{\pi}] \}.
    $
    Intuitively, we wish that elements of $A'_\pi$ realise the fluted 1-type $\pi$.
    We thus assign $\ftp^{\fA}_1(a) := \pi$ for all $a \in A'_\pi$.
    Recalling that $A_\pi$ is the set of all elements that realise the 1-type $\pi$ in $\fA$,
    we have that $A_\pi = A'_\pi$ is of cardinality $x_\pi$ as required by (\ref{FLPC2_assignments_2}).
    By \ref{eq_FLPC2_Psi1}, the domain is non-empty.

    Picking any $\pi \in \FTP^\sigma_1$ we now move on to the assignment of fluted 2-types.
    Take any $\pi' \in \FTP^\sigma_1$ and let $S = \{ \tau \in \FTP^\sigma_2 \mid \tau \restriction_{[2,2]} = \pi' \}$,
    i.e. $S$ is the set of all fluted $2$-types containing the fluted 1-type $\pi'$ as an endpoint.
    By \ref{eq_FLPC2_Psi2}, we have that $\sum_{\tau \in S} y_{\pi, \tau} = x_{\pi'}$,
    thus $\sum_{\tau \in S} y_{\pi, \tau} = |A_{\pi'}|$.
    Now, pick some element $a \in A_\pi$.
    In case $\pi \neq \pi'$, equation \ref{eq_FLPC2_Psi5} prohibits fluted 2-types that feature the (non-negated) equality literal.
    We set fluted 2-types between $a$ and elements of $A_{\pi'}$ in any way that results in
    $|\{ b \in A_{\pi'} \mid \ftp^\fA_2(ab) = \tau \}| = y_{\pi, \tau}$ for each $\tau \in S$
    (n.b. the exact configuration of fluted 2-types between $a$ and elements of $A_{\pi'}$
    is irrelevant as the fluted 2-type of $ba$ for any $b \in A_{\pi'}$ is not set in this process).
    In case $\pi = \pi'$ notice that by \ref{eq_FLPC2_Psi5} there is exactly one $\tau^= \in S$ such that
    \begin{enumerate*}[label=(\roman*)]
        \item ${=} \in \tau^=$,
        \item $y_{\pi, \tau^=} \geq 0$, and with
        \item $\tau^= \restriction_{[2,2]} = \pi$.
    \end{enumerate*}
    By \ref{eq_FLPC2_Psi5} again, we have that $y_{\pi, \tau^=} = 1$.
    We therefore set the fluted 2-types between $a$ and $A_{\pi} \setminus \{ a \}$
    for each $\tau \in S \setminus \{ \tau^= \}$
    as in the case before and, additionally, specify that $\ftp^\fA_2(aa) := \tau^=$.

    By repeating the fluted 2-type assignment for each element $a \in A$ and fluted 1-type $\pi' \in \FTP^\sigma_1$
    we are guaranteed that elements in $A_\pi$ (where $\pi = \ftp^\fA(a)$) realise the fluted
    1-profile~$\rho_\pi := \{ \tau \mapsto y_{\pi, \tau} \mid \tau \in \FTP^\sigma_2 \}$ as required by (\ref{FLPC2_assignments_2}).
    The resulting structure is clearly globally homogeneous.

    We now claim that the resulting structure is a model of $\varphi$. Indeed, take any $a \in A$ with $\pi = \ftp^\fA_1(a)$ and
    suppose $\pi \models \alpha_r$ for some $r \in R$. Let $S = \{ \tau \in \FTP^\sigma_2 \mid \tau \models \gamma_r\}$.
    By equations \ref{eq_FLPC2_Psi3} and \ref{eq_FLPC2_Psi6}, the sum $\sum_{\tau \in S} y_{\pi, \tau}$ is a member of the linear set $n_r^{+p_r}$.
    Since the element $a$ is of fluted 1-type $\pi$, we have that it realises the profile $\rho_\pi$ in $\fA$.
    By our construction, $\rho_\pi(\tau) = y_{\pi, \tau}$ for each $\tau \in \FTP^\sigma_2$.
    Thus,
    $\rho_\pi \models \exists_{[n_r^{+p_r}]} \gamma_r$ which is equivalent to
    $\fA, a \models \exists_{[n_r^{+p_r}]} \gamma_r$ as required.

    On the other hand, suppose $\pi \models \beta_t$ for some $t \in T$. We claim that $\fA, a \not\models \exists_{[n_t^{+p_t}]} \delta_t$.
    To see this, let $S$ be the set $\{ \tau \in \FTP^\sigma_2 \mid \tau \models \delta_t \}$.
    Writing $k = \sum_{\tau \in S} y_{\pi, \tau}$
    we take note of equations \ref{eq_FLPC2_Psi4} and \ref{eq_FLPC2_Psi6}, and conclude that one of the following conditions must be true:
    \begin{enumerate}
        \item $k$ is smaller than the minimal element of $n_t^{+p_t}$; or
        \item $n_t^{+p_t} \subseteq \mathbb{N}$ and $k = \aleph_0$; or
        \item $p_t = 0$ and $k > n_t$; or
        \item $k$ is in between two consecutive elements of $n_t^{+p_t}$.
    \end{enumerate}
    Whichever case it may be, we have that $k \not\in n_t^{+p_t}$.
    Again, recalling that $\rho_\pi(\tau) = y_{\pi, \tau}$ for each $\tau \in \FTP^\sigma_2$,
    we conclude that $\rho_\pi \not\models \exists_{[n_t^{+p_t}]} \delta_t$ which is equivalent to saying $\fA, a \not\models \exists_{[n_t^{+p_t}]} \delta_t$.
\end{proof}

Given an $\FLPC^2$-sentence $\varphi$ we present a decision procedure for the (finite) satisfiability problem.
Compute a normal-form formula $\psi$ from $\varphi$ and
write the linear Diophantine equations $\Psi$.
Now, guess a solution vector $\bar{z}$ which can be done in non-deterministic polynomial time as a function of $||\Psi||$.
If $\bar{z}$ is indeed a solution for $\Psi$, accept,
otherwise, reject. In the case of the finite satisfiability problem, prohibit $\aleph_0$ from being a solution in $\Psi$.
Correctness of the procedure follows from the fact that,
by Lemma~\ref{cor:glob_homo}, $\psi$ if satisfiable then it is satisfiable in a globally homogeneous model.
Thus, by Lemma~\ref{flpc2_reduct_dir_1}, if $\psi$ is satisfiable, then $\Psi$ has a solution.
On the other hand, by Lemma~\ref{flpc2_reduct_dir_2}, if $\Psi$ has a solution, then $\psi$ is satisfiable.

Noting that the satisfiability problem for $\FL^2$ is \textsc{NExpTime}-hard \cite{phst19},
and that $||\Psi||$ is bounded by a polynomial function on the number of different fluted 1- and 2-types (of which there are $2^{||\varphi||}$ many),
we conclude the following:
\begin{theorem} \label{th:2_var_NExp}
    The (finite) satisfiability problem of $\FLPC^2$ is \textsc{NExpTime}-complete.
\end{theorem}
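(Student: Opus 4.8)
The plan is to read off \Cref{th:2_var_NExp} as an essentially routine corollary of the reduction assembled in this section. Given an $\FLPC^2$-sentence $\varphi$, first apply Lemma~\ref{lma:nmf} to compute, in polynomial time, an equisatisfiable normal-form sentence $\psi$; note that $\sizeof{\psi}$ is polynomial in $\sizeof{\varphi}$, so $\sig(\psi)$ is of size polynomial in $\sizeof{\varphi}$. From $\psi$ write down the system $\Psi = \Psi_1 \cup \cdots \cup \Psi_6$ of (disjunctive) linear Diophantine inequations described above. The decision procedure then simply guesses a solution vector $\bar z$ for $\Psi$ over $\N^*$ (over $\N$, in the finite-satisfiability variant) and verifies it, accepting iff the verification succeeds.

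For the running time, observe that fluted $1$- and $2$-types over $\sig(\psi) \cup \{=\}$ are sets of literals built from at most $\sizeof{\psi}$ predicate symbols, so $|\FTP^{\sig(\psi)}_1|$ and $|\FTP^{\sig(\psi)}_2|$ are each bounded by $2^{O(\sizeof{\psi})} = 2^{O(\sizeof{\varphi})}$. Hence $\Psi$ has at most exponentially many variables and clauses, each clause being of length polynomial in $2^{\sizeof{\varphi}}$, and the only numeric constants occurring in it ($n_r, p_r, n_t, p_t$, together with $\aleph_0$) are inherited from $\psi$ with their binary encodings, of length polynomial in $\sizeof{\varphi}$; consequently $\sizeof{\Psi} = 2^{O(\sizeof{\varphi})}$. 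As recalled in Section~\ref{sec:prereq}, solvability of such a system over $\N^*$ (or over $\N$) can be decided in \textsc{NPTime} in its length \cite{ilp_nptime}; in particular a solution, if one exists, has an encoding of size polynomial in $\sizeof{\Psi}$ and can be checked in time polynomial in $\sizeof{\Psi}$. Guessing and checking $\bar z$ therefore takes nondeterministic time $2^{O(\sizeof{\varphi})}$. Correctness follows by chaining the lemmas of this section: by Lemma~\ref{cor:glob_homo} a satisfiable $\psi$ has a globally homogeneous model; by Lemma~\ref{flpc2_reduct_dir_1} such a model yields a solution of $\Psi$; and by Lemma~\ref{flpc2_reduct_dir_2} any solution of $\Psi$ yields a globally homogeneous model of $\psi$. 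For finite satisfiability one additionally notes that disallowing $\aleph_0$ in $\Psi$ forces the construction of Lemma~\ref{flpc2_reduct_dir_2} to produce a finite model, while conversely a finite model gives a finite solution.

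It remains to match the upper bound with \textsc{NExpTime}-hardness. Since $\FL^2 \subseteq \FLPC^2$ and the satisfiability problem for $\FL^2$ is already \textsc{NExpTime}-hard \cite{phst19} (indeed so is that of $\mathcal{FO}^2$ \cite{GradelKV97}), general satisfiability of $\FLPC^2$ is \textsc{NExpTime}-hard; and since $\FL^2$ enjoys the finite model property (being contained in $\mathcal{FO}^2$), its finite and general satisfiability problems coincide, so finite satisfiability of $\FLPC^2$ is \textsc{NExpTime}-hard as well. I expect the only genuinely delicate point to be the size analysis above — in particular ensuring that the binary-encoded periodic moduli do not incur a second exponential blow-up in $\sizeof{\Psi}$, and that the system of inequations over $\N^*$ can indeed be decided within \textsc{NPTime} in $\sizeof{\Psi}$ (rather than merely in polynomial space); everything else is bookkeeping assembled from the lemmas already established.
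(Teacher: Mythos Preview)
Your proposal is correct and follows essentially the same approach as the paper: reduce to normal form, build the system $\Psi$, argue that $\sizeof{\Psi}$ is bounded polynomially in the number of fluted $1$- and $2$-types (hence $2^{O(\sizeof{\varphi})}$), guess a solution in \textsc{NPTime} in $\sizeof{\Psi}$, and chain Lemmas~\ref{cor:glob_homo}, \ref{flpc2_reduct_dir_1}, \ref{flpc2_reduct_dir_2} for correctness, with hardness inherited from $\FL^2$. The only additions you make beyond the paper are the explicit remark on the finite model property of $\FL^2$ for the finite-satisfiability lower bound and a slightly more careful size accounting, both of which are sound.
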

\section{More Than Two Variables}\label{sec:more}

We now generalise our results on homogeneity and decidability of satisfiability for higher-arity formulas of $\FLPC$.
Thus, throughout this section, we will be working in the $(\ell{+}1)$-variable sub-fragment of $\FLPC$, where $\ell \geq 2$ is fixed.

Firstly, we lift our homogeneity conditions to the multivariable setting.
Suppose $\fA$ is a $\sigma$-structure and take any $(\ell{-}1)$-tuple $\bar{b}$ from $A$ and $\pi \in \FTP^\sigma_\ell$.
Let $A_{\pi \leftarrow \bar{b}}$ be the set $\{ a \in A \mid \ftp^\fA(a\bar{b}) = \pi \}$.
We say that $\pi$ is \textit{$\bar{b}$-homogeneous} in $\fA$ if for each $a, a' \in A_{\pi \leftarrow \bar{b}}$
and all $c \in A$ we have that $\ftp^\fA(a\bar{b}c) = \ftp^\fA(a'\bar{b}c)$.
That is to say, $\bar{b}c$ absorbs the same fluted $(\ell{+}1)$-type from each $\ell$-tuple $a\bar{b}$ that realises the fluted $\ell$-type $\pi$.
If each $\pi \in \FTP^\sigma_\ell$ is $\bar{b}$-homogeneous in $\fA$, then we say that the $(\ell{-}1)$-tuple $\bar{b}$ is \textit{homogeneous} in $\fA$.
Finally, if each $(\ell{-}1)$-tuple $\bar{b}$ is homogeneous in $\fA$, then we say that $\fA$ is \textit{locally $\ell$-homogeneous}.

When considering satisfiable normal-form $\FLPC^{\ell{+}1}$-sentences we can, without loss of generality, confine ourselves to
locally $\ell$-homogeneous structures. To see this fix some normal-form $\FLPC^{\ell{+}1}$-sentence $\varphi$
 and take $\bar{b} \in A^{\ell{-}1}$ and $a, a' \in A_{\pi \leftarrow \bar{b}}$.
Proceeding similarly as before Lemma~\ref{cor:glob_homo} we see that the fluted $(\ell{+}1)$-type of $a'\bar{b}c$
does not impact the satisfaction of quantifier-free $\FLPC^{\ell{+}1}$-formulas by $c\bar{b}^{-1}a'$.
Thus, redefining the fluted $(\ell{+}1)$-types emitted by $a'\bar{b}$ will not alter the satisfaction of quantifier-free $\FLPC^{\ell{+}1}$-formulas by other tuples.
Notice again that it is impossible to prohibit $\ftp^\fA(a\bar{b}c) \neq \ftp^\fA(a'\bar{b}c)$ by a normal-form formula.
Thus, by setting $\ftp^\fA(a'\bar{b}c) := \ftp^\fA(a\bar{b}c)$ for each $c \in A$ and repeating the procedure for each
$a' \in A_{\pi \leftarrow \bar{b}} \setminus \{ a \}$, we will have that $\pi$ is \textit{$\bar{b}$-homogeneous} in $\fA$.
Clearly $\fA \models \varphi$ as the tuples rewired by this procedure (i.e. $a'\bar{b}$ with $a' \in A_{\pi \leftarrow \bar{b}} \setminus \{ a \}$)
now emit $\tau \in \FTP^\sigma_{\ell{+}1}$ to a given witness if and only if $a\bar{b}$ does.
The rewired tuples thus satisfy the same exact $\FLPC^{\ell{+}1}$-formulas with at most 1 quantifier as $a\bar{b}$ does.
Repeating the procedure for all $\pi \in \FTP^\sigma_\ell$ and $\bar{b} \in A^{\ell{-}1}$
we will have the~following:
\begin{lemma} \label{cor:loc_homo}
    Suppose $\varphi$ is a satisfiable normal-form $\FLPC^{\ell{+}1}$-sentence. Then, $\varphi$ is satisfiable in a locally $\ell$-homogeneous model.
\end{lemma}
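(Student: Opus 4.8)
The plan is to mimic, almost verbatim, the rewiring argument carried out just before Lemma~\ref{cor:glob_homo} for the two-variable case, but now performed "fibre-wise" over each homogeneity parameter $\bar b \in A^{\ell-1}$. So I would start from an arbitrary model $\fA \models \varphi$ (which exists since $\varphi$ is satisfiable), with $\varphi$ in the normal form~(\ref{eq:nmf}), i.e. a conjunction of sentences $\forall^\ell(\alpha_r \to \exists_{[n_r^{+p_r}]}\gamma_r)$ and $\forall^\ell(\beta_t \to \neg\exists_{[n_t^{+p_t}]}\delta_t)$. Fix a parameter tuple $\bar b \in A^{\ell-1}$ and a fluted $\ell$-type $\pi \in \FTP^\sigma_\ell$, and recall $A_{\pi\leftarrow\bar b} = \{a \in A \mid \ftp^\fA(a\bar b) = \pi\}$. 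If this set is empty there is nothing to do; otherwise pick a fixed "representative" $a \in A_{\pi\leftarrow\bar b}$ and, for every other $a' \in A_{\pi\leftarrow\bar b}$ and every $c \in A$, redefine $\ftp^\fA(a'\bar b c) := \ftp^\fA(a\bar b c)$. The first thing to check is that this reassignment is well-defined — i.e. that these $(\ell+1)$-types can be freely overwritten without creating conflicts. The key observation (the same one used in the $\FLPC^2$ case) is that the fluted $(\ell+1)$-type of $a'\bar b c$ carries no information about any tuple whose first coordinate is not $a'$; in particular it says nothing about $c\bar b^{-1}a'$, since fluted atoms only use suffixes of the variable order and hence never "read backwards". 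So overwriting the $(\ell+1)$-types emitted by $a'\bar b$ changes only the quantifier-free $\FLPC^{\ell+1}$-formulas satisfied by tuples of the form $a'\bar b c$, and leaves every other $(\ell+1)$-tuple's quantifier-free theory untouched.

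Next I would verify that the rewired structure still models $\varphi$. Since every conjunct of~(\ref{eq:nmf}) is a $\forall^\ell$-sentence, it suffices to check that each $\ell$-tuple still satisfies the matrix $\bigwedge_r(\alpha_r \to \exists_{[n_r^{+p_r}]}\gamma_r) \wedge \bigwedge_t(\beta_t \to \neg\exists_{[n_t^{+p_t}]}\delta_t)$. For an $\ell$-tuple $a'\bar b$ with $a' \in A_{\pi\leftarrow\bar b}\setminus\{a\}$ this holds because, after rewiring, $a'\bar b$ emits each $\tau \in \FTP^\sigma_{\ell+1}$ to a given element $c$ exactly when $a\bar b$ does; hence $\fpr^\fA_\ell(a'\bar b) = \fpr^\fA_\ell(a\bar b)$, and since satisfaction of $\exists_{[n^{+p}]}\psi$ for quantifier-free $\psi$ depends only on the profile (as noted in Section~\ref{sec:prereq}), $a'\bar b$ satisfies exactly the same one-quantifier $\FLPC^{\ell+1}$-formulas as $a\bar b$, which already satisfied the matrix in $\fA$. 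Also $\ftp^\fA(a'\bar b) = \pi = \ftp^\fA(a\bar b)$ is preserved, so the antecedents $\alpha_r, \beta_t$ are triggered identically. For every other $\ell$-tuple $\bar c$ — including ones of the form $c\bar b^{-1}a'$ — its profile is unchanged by the argument of the previous paragraph, so it still satisfies the matrix. Hence $\fA \models \varphi$.

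Finally, since distinct pairs $(\pi,\bar b)$ touch disjoint sets of "emitting" $\ell$-tuples (the ones whose first coordinate lies in $A_{\pi\leftarrow\bar b}$ and whose remaining coordinates form $\bar b$), I can run all these rewirings simultaneously — or, if one prefers a cleaner exposition, sequentially, noting that a rewiring for $(\pi,\bar b)$ never disturbs which elements lie in $A_{\pi'\leftarrow\bar b'}$ for another pair, nor the representative-based target types already fixed. After processing every $\bar b \in A^{\ell-1}$ and every $\pi \in \FTP^\sigma_\ell$, each such $\pi$ is $\bar b$-homogeneous, so every $\bar b$ is homogeneous and the structure is locally $\ell$-homogeneous, while still a model of $\varphi$. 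I expect the only real subtlety — the "hard part", such as it is — to be stating the independence observation precisely enough that the reader is convinced the various rewirings neither conflict with each other nor with the backward tuples $c\bar b^{-1}a'$; everything else is a direct transcription of the $\ell=1$ argument from before Lemma~\ref{cor:glob_homo}, which is presumably why the excerpt leaves the proof implicit in the surrounding prose.
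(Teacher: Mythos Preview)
Your proposal is correct and follows essentially the same approach as the paper: pick a representative $a \in A_{\pi\leftarrow\bar b}$, overwrite $\ftp^\fA(a'\bar b c) := \ftp^\fA(a\bar b c)$ for all other $a'$ and all $c$, observe that only $(\ell{+}1)$-ary atoms on tuples with prefix $a'\bar b$ are touched (so no other $\ell$-tuple's profile or $\ell$-type changes), and iterate over all $(\pi,\bar b)$. Your write-up is in fact more detailed than the paper's own prose preceding the lemma, and your remark that distinct pairs $(\pi,\bar b)$ rewire disjoint sets of emitting $\ell$-tuples is exactly the independence observation the paper leaves implicit.
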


Using local $\ell$-homogeneity coupled with variable reduction techniques prevalent in studies of fluted logics (see \cite{phst19}),
we will establish a decidability result for the (finite) satisfiability problem of $\FLPC^{\ell{+}1}$.
More specifically, fixing a normal-form $\FLPC^{\ell{+}1}$-sentence $\varphi$ we will compute a normal-form $\FLPC^\ell$-sentence $\psi$
that is satisfiable in structures holding just enough information to build locally $\ell$-homogeneous models for $\varphi$.
To aid motivation, we fix
$\fA$ to be any locally $\ell$-homogeneous model of $\varphi$.
We shall construct $\psi$ whilst also expanding $\fA$ into $\fA' \models \psi$.
Note that the construction depends exclusively on the syntactic properties of $\varphi$.

First, set $\fA' := \fA$.
Take $(q_{\pi})_{\pi \in \FTP^\sigma_\ell}$ to be a sequence of fresh $(\ell{-}1)$-ary predicate symbols.
In $\fA'$ we decorate $(\ell{-}1)$-tuples $\bar{b}$ over $A$ with $q_\pi$ just in case we have that $A_{\pi \leftarrow \bar{b}} \neq \emptyset$.
That is to say, $q_\pi^{\fA'}$ remembers which $(\ell{-}1)$-tuples can be extended (by appending an element to the left)
to realise the fluted $\ell$-type $\pi$.
It is clear that $\fA'$ models the following:
\begin{equation}
    \bigwedge_{\pi \in \FTP^\sigma_\ell} \forall^\ell \Big( \pi \to q_\pi \Big). \tag{$\psi_1$} \label{eq:red_sent1}
\end{equation}

Proceeding similarly, let $(s_{\pi, \tau})^{\tau \in \FTP^\sigma_{\ell{+}1}}_{\pi \in \FTP^\sigma_\ell}$ be a sequence of new $\ell$-ary
predicates. Intuitively, we will have $\bar{b}c \in s_{\pi, \tau}^{\fA'}$ if in $\fA$ it is the case that $\bar{b}c$ absorbs
the fluted $(\ell{+}1)$-type $\tau$ emitted from $a\bar{b}$ for some $a \in A_{\pi \leftarrow \bar{b}}$.
Notice that, by local $\ell$-homogeneity, if $\bar{b}c$ absorbs $\tau$ from some $a\bar{b}$ with $a \in A_{\pi \leftarrow \bar{b}}$,
then it absorbs $\tau$ from $a'\bar{b}$ for all $a' \in A_{\pi \leftarrow \bar{b}}$.
Thus, by our construction, $s_{\pi, \tau}$ is the unique predicate amongst $(s_{\pi, \tau'})^{\tau' \in \FTP^\sigma_{\ell{+}1}}$
satisfied by $\bar{b}c$ in $\fA'$.
Clearly, $\fA'$ models:
\begin{equation}    
    \begin{split}
        \bigwedge_{\pi \in \FTP^\sigma_\ell} \bigwedge_{\tau \in \FTP^\sigma_{\ell{+}1}} &\forall^{\ell\phantom{-1}} \Big( s_{\pi, \tau} \to \tau \restriction_{[2,\ell{+}1]} \Big) \wedge \mbox{} \\
        \bigwedge_{\pi \in \FTP^\sigma_\ell} &\forall^{\ell{-}1} \Big( q_\pi \to \forall \big( \bigvee_{\tau \in \FTP^\sigma_{\ell{+}1}} s_{\pi, \tau} \wedge
        \bigwedge_{\tau, \tau' \in \FTP^\sigma_{\ell{+}1}}^{\tau \neq \tau'} (\neg s_{\pi, \tau} \vee \neg s_{\pi, \tau'})
        \big) \Big).
    \end{split}  \tag{$\psi_2$}\label{eq:red_sent2}
\end{equation}

Again taking $\bar{b} \in A^{\ell{-}1}$ and any $\pi \in \FTP^\sigma_\ell$ suppose $\pi \models \alpha_r$ for some $r \in R$.
In case $A_{\pi \leftarrow \bar{b}}$ is non-empty (thus guaranteeing $\bar{b} \in q_\pi^{\fA'}$), we pick any $a \in A_{\pi \leftarrow \bar{b}}$ and write
$S = \{ c \in A \mid \fA, a\bar{b}c \models \gamma_r \}$.
Since $\pi$ is $\bar{b}$-homogeneous in $\fA$, the exact element in $A_{\pi \leftarrow \bar{b}}$ we pick has no effect on $S$.
By our construction, $S$ is then exactly the set of element $c \in A$ such that
$\fA', \bar{b}c \models \bigvee_{\tau \in \FTP^\sigma_{\ell{+}1}}^{\tau \models \gamma_r} s_{\pi, \tau}$.
Since $|S| \in n_r^{{+}p_r}$ it is then immediate that $\fA'$ models the following: 
\begin{equation}
    \bigwedge_{r \in R} \bigwedge_{\pi \in \FTP^\sigma_{\ell}}^{\pi \models \alpha_r} \forall^{\ell{-}1}
    \Big(
        q_\pi \to \exists_{[n_r^{+p_r}]} \bigvee_{\tau \in \FTP^\sigma_{\ell{+}1}}^{\tau \models \gamma_r} s_{\pi, \tau}
    \Big). \tag{$\psi_3$} \label{eq:red_sent3}
\end{equation}
Similar observations follow whenever $\pi \models \beta_t$ for some $t \in T$.
This time, however, the cardinality of $S = \{ c \in A \mid \fA, a\bar{b}c \models \delta_r \}$
must be outside the set $n_r^{+p_r}$.
Clearly, $\fA'$ models:
\begin{equation}
    \bigwedge_{t \in T} \bigwedge_{\pi \in \FTP^\sigma_{\ell}}^{\pi \models \beta_t} \forall^{\ell{-}1}
    \Big(
        q_\pi \to \neg \exists_{[n_t^{+p_t}]} \bigvee_{\tau \in \FTP^\sigma_{\ell{+}1}}^{\tau \models \delta_t} s_{\pi, \tau}
    \Big). \tag{$\psi_4$} \label{eq:red_sent4}
\end{equation}

Writing $\psi := \psi_1 \wedge \cdots \wedge \psi_4$ we have shown the following:
\begin{lemma}\label{FLPC_var_reduct_1}
    Suppose $\fA \models \varphi$ is a locally $\ell$-homogeneous model. Then, $\fA$ can be extended to a model $\fA'$ of
    $\psi$.
\end{lemma}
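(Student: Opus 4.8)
The statement to prove (Lemma~\ref{FLPC_var_reduct_1}) is, in fact, almost entirely established by the discussion immediately preceding it in the excerpt: the text constructs $\fA'$ from a locally $\ell$-homogeneous $\fA \models \varphi$ and verifies that $\fA'$ models each of $\psi_1,\dots,\psi_4$. So the plan is to assemble those four verifications into a single clean argument.

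First I would fix a locally $\ell$-homogeneous model $\fA \models \varphi$ and define the expansion $\fA'$ explicitly: keep the same domain $A$, interpret the original signature $\sigma$ as in $\fA$, and add interpretations for the fresh predicates. For each $\pi \in \FTP^\sigma_\ell$, set $q_\pi^{\fA'} := \{ \bar{b} \in A^{\ell-1} \mid A_{\pi \leftarrow \bar{b}} \neq \emptyset \}$; and for each $\pi \in \FTP^\sigma_\ell$, $\tau \in \FTP^\sigma_{\ell+1}$, set $s_{\pi,\tau}^{\fA'} := \{ \bar{b}c \in A^\ell \mid \exists a \in A_{\pi \leftarrow \bar{b}} \text{ with } \ftp^\fA_{\ell+1}(a\bar{b}c) = \tau \}$. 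Then I would verify each conjunct in turn. For $\psi_1$: if $a\bar{b}$ realises $\pi$ then $a \in A_{\pi \leftarrow \bar{b}}$, so $\bar{b} \in q_\pi^{\fA'}$. For $\psi_2$: the first part holds because $\bar{b}c \in s_{\pi,\tau}^{\fA'}$ forces $\ftp^\fA_{\ell+1}(a\bar{b}c) = \tau$ for some $a$, hence $\bar{b}c$ realises $\tau\restriction_{[2,\ell+1]}$; for the second part, if $\bar{b} \in q_\pi^{\fA'}$, pick $a \in A_{\pi \leftarrow \bar{b}}$, and note that for any $c$ the tuple $\bar{b}c$ lies in $s_{\pi, \tau}^{\fA'}$ for $\tau = \ftp^\fA_{\ell+1}(a\bar{b}c)$, giving existence; uniqueness is exactly the point where $\bar{b}$-homogeneity of $\pi$ is used — the choice of $a \in A_{\pi \leftarrow \bar{b}}$ does not affect $\ftp^\fA_{\ell+1}(a\bar{b}c)$, so no two distinct $\tau$ can label the same $\bar{b}c$.

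Next, for $\psi_3$ and $\psi_4$ I would argue as follows. Fix $\bar{b} \in A^{\ell-1}$ and $\pi \in \FTP^\sigma_\ell$ with $\pi \models \alpha_r$, and suppose $\bar{b} \in q_\pi^{\fA'}$, i.e. $A_{\pi \leftarrow \bar{b}} \neq \emptyset$. Pick any $a \in A_{\pi \leftarrow \bar{b}}$. By $\bar{b}$-homogeneity of $\pi$, the set $S = \{ c \in A \mid \fA, a\bar{b}c \models \gamma_r \}$ is independent of the choice of $a$, and since $\ftp^\fA(a\bar{b}) = \pi \models \alpha_r$ and $\fA \models \varphi$, we have $|S| \in n_r^{+p_r}$. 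The key identity is that $c \in S$ iff $\fA', \bar{b}c \models \bigvee_{\tau \models \gamma_r} s_{\pi,\tau}$: indeed $c \in S$ iff $\ftp^\fA_{\ell+1}(a\bar{b}c) \models \gamma_r$, which (using uniqueness from $\psi_2$) holds iff $\bar{b}c \in s_{\pi,\tau}^{\fA'}$ for the unique $\tau = \ftp^\fA_{\ell+1}(a\bar{b}c)$, which satisfies $\gamma_r$. Hence $\fA'$ satisfies the $(\pi,r)$-conjunct of $\psi_3$. The case of $\psi_4$ is identical except $\pi \models \beta_t$ forces $|S| \notin n_t^{+p_t}$, hence $\fA', \bar{b} \not\models \exists_{[n_t^{+p_t}]} \bigvee_{\tau \models \delta_t} s_{\pi,\tau}$.

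I do not expect a genuine obstacle here, since this direction of the reduction only uses homogeneity to make the auxiliary predicates well-defined and single-valued; the real subtlety of the whole construction lies in the converse direction (recovering a model of $\varphi$ from a model of $\psi$), which is not part of this lemma. The one point requiring care is making sure that $\bar{b}$-homogeneity is invoked at exactly the two places it is needed — the uniqueness clause in $\psi_2$ and the well-definedness of $S$ in $\psi_3,\psi_4$ — and that the translation between "$c \in S$" and "$s_{\pi,\tau}$ holds at $\bar{b}c$" is mediated correctly by the defining property of the $s_{\pi,\tau}$'s together with the uniqueness just established.
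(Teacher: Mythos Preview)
Your proposal is correct and follows exactly the paper's own approach: the paper proves this lemma by the running construction and verification of $\psi_1,\dots,\psi_4$ immediately preceding the lemma statement, and your writeup is a clean consolidation of precisely that argument, with the same definitions of $q_\pi^{\fA'}$ and $s_{\pi,\tau}^{\fA'}$ and the same use of local $\ell$-homogeneity for uniqueness in $\psi_2$ and well-definedness of $S$ in $\psi_3,\psi_4$.
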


\begin{lemma}\label{FLPC_var_reduct_2}
    Suppose $\fA' \models \psi$. Then, we can construct a locally $\ell$-homogeneous model $\fA^+$ of $\varphi$
    over the same domain.
\end{lemma}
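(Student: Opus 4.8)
The plan is to reverse the construction preceding Lemma~\ref{FLPC_var_reduct_1}: from a model $\fA' \models \psi$ over domain $A$ we read off, for each $(\ell{-}1)$-tuple $\bar{b}$ and each $\pi \in \FTP^\sigma_\ell$ with $\bar{b} \in q_\pi^{\fA'}$, a blueprint for how an element realising $\pi$ over $\bar{b}$ should form fluted $(\ell{+}1)$-types with every $c \in A$, namely: emit to $\bar{b}c$ the unique $\tau$ with $\bar{b}c \in s_{\pi,\tau}^{\fA'}$ (unique by $\psi_2$). First I would fix the reduct $\fA^+$ of $\fA'$ to $\sigma$ and re-specify its fluted $(\ell{+}1)$-types as follows. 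Every fluted $\ell$-type and fluted $j$-type for $j \le \ell$ is already determined by $\fA'$ and is left unchanged (so $\fA^+$ restricted to $\le\ell$-tuples agrees with $\fA$-in-the-picture); the only thing to (re)define is, for each $\ell$-tuple $a\bar{b}$ and each $c \in A$, the fluted $(\ell{+}1)$-type of $a\bar{b}c$. Set $\pi := \ftp^{\fA^+}_\ell(a\bar{b})$; by $\psi_1$ we have $\bar{b} \in q_\pi^{\fA'}$, so by $\psi_2$ there is a unique $\tau$ with $\bar{b}c \in s_{\pi,\tau}^{\fA'}$, and $\psi_2$ also guarantees $\tau\restriction_{[2,\ell{+}1]}$ equals the already-fixed fluted $\ell$-type of $\bar{b}c$; I then declare $\ftp^{\fA^+}_{\ell{+}1}(a\bar{b}c) := \tau$. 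This assignment is consistent because the choice of $\tau$ depends only on $\pi$ and $\bar{b}c$, not on $a$, so two tuples $a\bar{b}, a'\bar{b}$ with the same fluted $\ell$-type get the same behaviour over every $c$ --- which is exactly $\bar{b}$-homogeneity of $\pi$. Ranging over all $\pi$ and $\bar{b}$ yields local $\ell$-homogeneity of $\fA^+$.

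Next I would verify $\fA^+ \models \varphi$, i.e. each conjunct $\forall^\ell(\alpha_r \to \exists_{[n_r^{+p_r}]}\gamma_r)$ and $\forall^\ell(\beta_t \to \neg\exists_{[n_t^{+p_t}]}\delta_t)$. Take any $\ell$-tuple $a\bar{b}$ and suppose $\fA^+, a\bar{b} \models \alpha_r$; writing $\pi = \ftp^{\fA^+}_\ell(a\bar{b})$ we have $\pi \models \alpha_r$ and $\bar{b} \in q_\pi^{\fA'}$, so by $\psi_3$, $\fA', \bar{b} \models \exists_{[n_r^{+p_r}]}\bigvee_{\tau \models \gamma_r} s_{\pi,\tau}$. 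By construction $\{c \in A \mid \fA^+, a\bar{b}c \models \gamma_r\}$ is precisely $\{c \in A \mid \fA', \bar{b}c \models \bigvee_{\tau\models\gamma_r} s_{\pi,\tau}\}$ (using uniqueness of the satisfied $s_{\pi,\cdot}$-predicate and $\psi_2$'s link $s_{\pi,\tau}\to\tau\restriction_{[2,\ell{+}1]}$, together with the fact that whether $\gamma_r$ holds of $a\bar{b}c$ is decided by the fluted $(\ell{+}1)$-type $\tau$ alone). Hence this set has cardinality in $n_r^{+p_r}$, which is $\fA^+, a\bar{b} \models \exists_{[n_r^{+p_r}]}\gamma_r$ as required. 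The $\beta_t$-conjuncts are handled identically using $\psi_4$, with ``cardinality in $n_t^{+p_t}$'' replaced by ``cardinality outside $n_t^{+p_t}$''.

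The one subtlety I expect to be the main obstacle --- and the point to treat carefully --- is that redefining the top-level fluted $(\ell{+}1)$-types must not disturb the lower-arity fluted types, and must be done coherently across the $(\ell{+}1)$-tuples sharing a common $\ell$-suffix $\bar{b}c$. The safeguard is $\psi_2$'s first conjunct $s_{\pi,\tau}\to\tau\restriction_{[2,\ell{+}1]}$: it forces the candidate $\tau$ read off from $\fA'$ at $\bar{b}c$ to have the same endpoint as the fluted $\ell$-type $\fA'$ already assigns to $\bar{b}c$, so no atom of arity $\le\ell$ changes and there is no clash between different $a$'s using the same $\bar{b}c$. I should also note that fluted atoms of arity exactly $\ell{+}1$ appearing in $a\bar{b}c$ only ever occur ``read left to right'' (flutedness forbids atoms on $c\bar{b}^{-1}a$), so reassigning them freely subject to the endpoint constraint is legitimate --- this is the same observation used before Lemma~\ref{cor:loc_homo}. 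With that in place, the equivalences in the previous paragraph are immediate and the lemma follows.
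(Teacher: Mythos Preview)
Your proposal is correct and follows essentially the same approach as the paper's proof: strip the auxiliary predicates, then for each $a\bar{b}$ with $\pi = \ftp_\ell(a\bar{b})$ use $\psi_1$ to obtain $\bar{b}\in q_\pi$, read off the unique $\tau$ from the $s_{\pi,\tau}$-decoration on $\bar{b}c$ via $\psi_2$, set $\ftp_{\ell{+}1}(a\bar{b}c):=\tau$, and verify the $\alpha_r$- and $\beta_t$-conjuncts of $\varphi$ using $\psi_3$ and $\psi_4$ respectively. If anything, you are slightly more explicit than the paper in checking that the assignment is well-defined (the endpoint clause of $\psi_2$) and in deriving local $\ell$-homogeneity from the fact that the chosen $\tau$ depends only on $\pi$ and $\bar{b}c$.
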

\begin{proof}
    Supposing $\psi$ is satisfiable we take any model $\fA'$.
    Now, let $\fA^-$ be the model $\fA'$ but with the predicates in $(q_\pi)_{\pi \in \FTP^\sigma_{\ell}}$ and $(s_{\pi, \tau})_{\pi \in \FTP^\sigma_{\ell}}^{\tau \in \FTP^\sigma_{\ell{+}1}}$
    removed from the signature.
    We proceed by expanding $\fA^-$ into a locally $\ell$-homogeneous model $\fA^+$ of the original sentence
    $\varphi$.
    
    Fix $\bar{b} \in A^{\ell{-}1}$ and take some $a \in A$.
    Supposing that $\ftp^{\fA^-}_{\ell}(a\bar{b}) = \pi$,
    by \ref{eq:red_sent1} we have that $\fA', \bar{b} \models q_\pi$.
    Taking any $c \in A$ we observe that the conjuncts of \ref{eq:red_sent2} enforce the following:
    \begin{itemize}
        \item if $\fA', \bar{b}c \models s_{\pi, \tau}$ for some $\tau \in \FTP^\sigma_{\ell{+}1}$,
            then $\bar{b}c$ can absorb the fluted $(\ell{+}1)$-type $\tau$,
        \item $\bar{b}c$ satisfies at least one of the predicates $(s_{\pi, \tau})_{\tau \in \FTP^\sigma_{\ell{+}1}}$, and
        \item $\bar{b}c$ satisfies at most one of the predicates $(s_{\pi, \tau})_{\tau \in \FTP^\sigma_{\ell{+}1}}$.
    \end{itemize}
    We can then safely set $\ftp^{\fA^+}_{\ell{+}1}(a\bar{b}c) := \tau$ for each $c \in A$,
    where $\tau$ is taken from the subscript of the unique $s_{\pi, \tau} \in (s_{\pi, \tau})_{\tau \in \FTP^\sigma_{\ell{+}1}}$
    that $\bar{b}c$ satisfies in $\fA'$.
    By repeating the above procedure for all $a \in A$ and tuples
    $\bar{b} \in A^{\ell{-}1}$ we will obtain the desired structure $\fA^+$.

    To verify that $\fA^+$ is a model of $\varphi$ we first claim that 
    $
        \fA^+ \models \bigwedge_{r \in R} \forall^\ell ( \alpha_r \to \exists_{[n_r^{+p_r}]} \gamma_r ).
    $
    For this purpose, fix some $r \in R$ and $a\bar{b} \in A^\ell$, and suppose
    $\pi \models \alpha_r$, where $\ftp^{\fA^+}_\ell(a\bar{b}) = \pi$.
    Recall that $\bar{b} \in q_\pi^{\fA'}$ by \ref{eq:red_sent1}.
    Then, \ref{eq:red_sent3} gives us
    $\fA', \bar{b} \models \exists_{[n_r^{+p_r}]} \bigvee_{\tau \in \FTP^\sigma_{\ell{+}1}}^{\tau \models \gamma_r} s_{\pi, \tau}$.
    Taking any $c \in A$ we have, by our construction, that $\fA^+, a\bar{b}c \models \tau$ if and only if $\fA', \bar{b}c \models s_{\pi, \tau}$.
    Thus, $\fA^+, a\bar{b} \models \exists_{[n_r^{+p_r}]} \bigvee_{\tau \in \FTP^\sigma_{\ell{+}1}}^{\tau \models \gamma_r} \tau$,
    which is equivalent to saying $\fA^+, a\bar{b} \models \exists_{[n_r^{+p_r}]} \gamma_r$.
    Repeating the argument for each $r \in R$ and $a\bar{b} \in A^\ell$ will yield the required result.

    To show
    $
        \fA^+ {\models} \bigwedge_{t \in T} \forall^\ell ( \beta_t {\to} \neg \exists_{[n_t^{+p_t}]} \delta_t )   
    $
    we proceed analogously with \ref{eq:red_sent4} in place of~\ref{eq:red_sent3}.
\end{proof}

Let us take stock of the previous three lemmas.
Take $\varphi$ to be an $\FLPC^{\ell{+}1}$-sentence. Without loss of generality,
assume that it is in normal-form (Lemma~\ref{lma:nmf}).
By Lemma~\ref{cor:loc_homo}, $\varphi$ is satisfiable if it is satisfiable in a locally $\ell$-homogeneous model.
By computing the formula $\psi$ we have, by Lemmas~\ref{FLPC_var_reduct_1}~and~\ref{FLPC_var_reduct_2},
that $\psi$ is (finitely) satisfiable if and only if $\varphi$ is.
Noting that the (finite) satisfiability problem for $\FLPC^2$ is in \textsc{NExpTime} (Theorem~\ref{th:2_var_NExp}) and that
$\psi$ can be constructed from $\varphi$ in polynomial time in regards to the number of different fluted $\ell$- and $(\ell{+}1)$-types
(of which there are $2^{O(||\varphi||)}$),
we conclude the following:
%
\begin{theorem} \label{th:more_vars_time}
    The (finite) satisfiability problem for $\FLPC^{\ell{+}1}$ is in $\ell$-\textsc{NExpTime}.
\end{theorem}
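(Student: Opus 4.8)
The plan is to proceed by induction on $\ell \geq 2$, combining the variable-reduction machinery of Lemmas~\ref{FLPC_var_reduct_1} and~\ref{FLPC_var_reduct_2} with the normal-form computation of Lemma~\ref{lma:nmf} and the base case supplied by Theorem~\ref{th:2_var_NExp}. For the base case $\ell = 2$ (i.e.\ $\FLPC^3$-sentences), note that $2$-\textsc{NExpTime} is meant, but actually the statement asserts membership in $\ell$-\textsc{NExpTime} for $\FLPC^{\ell+1}$, so I should be careful: the true base case of the induction is $\ell = 1$ (i.e.\ $\FLPC^2$), which is in $1$-\textsc{NExpTime} $=$ \textsc{NExpTime} by Theorem~\ref{th:2_var_NExp}. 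The inductive step handles $\FLPC^{\ell+1}$ for $\ell \geq 2$.

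For the inductive step, I would take an $\FLPC^{\ell+1}$-sentence $\varphi$, first compute in polynomial time an equisatisfiable normal-form $\FLPC^{\ell+1}$-sentence via Lemma~\ref{lma:nmf}, and then, as described in Section~\ref{sec:more}, compute the $\FLPC^\ell$-sentence $\psi := \psi_1 \wedge \cdots \wedge \psi_4$ over the expanded signature $\sigma \cup \{q_\pi\}_\pi \cup \{s_{\pi,\tau}\}_{\pi,\tau}$. The key correctness ingredients are already in place: by Lemma~\ref{cor:loc_homo}, $\varphi$ is (finitely) satisfiable iff it has a locally $\ell$-homogeneous (finite) model; by Lemma~\ref{FLPC_var_reduct_1} such a model expands to a (finite) model of $\psi$; and by Lemma~\ref{FLPC_var_reduct_2} any (finite) model of $\psi$ yields a locally $\ell$-homogeneous (finite) model of $\varphi$ over the same domain. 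Hence $\varphi$ is (finitely) satisfiable iff $\psi$ is. Since $\psi$ is an $\FLPC^\ell$-sentence, the inductive hypothesis gives a decision procedure for its (finite) satisfiability running in $(\ell-1)$-\textsc{NExpTime} as a function of $\|\psi\|$.

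The remaining work is the complexity bookkeeping. I would argue that $\|\psi\| = 2^{O(\|\varphi\|)}$: the number of fluted $\ell$-types and $(\ell+1)$-types over $\sig(\varphi)$ is bounded by $2^{\|\varphi\|}$ since each is a set of at most $\|\varphi\|$ (possibly negated) fluted atoms, so the new signature has singly-exponentially many fresh predicates, and each of the conjuncts $\psi_1,\dots,\psi_4$ is a conjunction/disjunction over these type sets, giving a sentence of singly-exponential size computable in singly-exponential time. Running the $(\ell-1)$-\textsc{NExpTime} procedure on an input of size $2^{O(\|\varphi\|)}$ yields a running time of the form $2^{2^{\cdots}}$ with $\ell$ nested exponentials measured against $\|\varphi\|$, i.e.\ $\ell$-\textsc{NExpTime}; the initial singly-exponential blow-up from $\varphi$ to $\psi$ is absorbed into the tower. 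Both the satisfiability and the finite satisfiability versions go through simultaneously, since Lemmas~\ref{cor:loc_homo}, \ref{FLPC_var_reduct_1}, \ref{FLPC_var_reduct_2} preserve finiteness of models (indeed Lemma~\ref{FLPC_var_reduct_2} produces a model over the same domain), and the base case Theorem~\ref{th:2_var_NExp} covers both.

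The main obstacle I anticipate is not conceptual but notational: one must be precise that ``$(\ell{-}2)$ additional exponentials on top of a singly-exponential reduction'' really collapses into exactly $\ell$ exponentials, i.e.\ that composing a size-$2^{O(n)}$ reduction with an $(\ell-1)$-\textsc{NExpTime} algorithm gives $\ell$-\textsc{NExpTime} rather than $(\ell{-}1)$- or $(\ell{+}1)$-; this is a routine but easily-misstated fact about towers of exponentials, and I would state it as a small auxiliary observation (the composition $2^{2^{\cdot^{\cdot^{2^{2^{O(n)}}}}}}$ with $k$ twos, evaluated at an argument that is itself $2^{O(n)}$, is $2^{2^{\cdot^{\cdot^{2^{2^{O(n)}}}}}}$ with $k{+}1$ twos) before applying it. Everything else is a direct assembly of the already-established lemmas.
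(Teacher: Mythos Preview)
Your proposal is correct and follows essentially the same approach as the paper: reduce $\FLPC^{\ell+1}$ to $\FLPC^\ell$ via the normal-form computation (Lemma~\ref{lma:nmf}) and the variable-reduction sentence $\psi$, invoke Lemmas~\ref{cor:loc_homo}, \ref{FLPC_var_reduct_1}, \ref{FLPC_var_reduct_2} for the (finite) equisatisfiability, observe that $\|\psi\|$ is singly exponential in $\|\varphi\|$ (polynomial in the number of fluted $\ell$- and $(\ell{+}1)$-types, of which there are $2^{O(\|\varphi\|)}$), and iterate down to the \textsc{NExpTime} base case for $\FLPC^2$. Your explicit framing as an induction with the tower-composition observation is slightly more detailed than the paper's one-paragraph summary, but the argument is the same.
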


Noting that the (finite) satisfiability problem for $\FL^{\ell{+}1}$ is $\lfloor (\ell{+}1) / 2 \rfloor$-\textsc{NExpTime}-hard \cite{phst19},
we see that no elementary function can encapsulate the complexity of (finite) satisfiability for $\FLPC$.
We thus conclude our section having reached our initial goal:
\begin{theorem}
    The (finite) satisfiability problem for $\FLPC$ is \textsc{Tower}-complete.
\end{theorem}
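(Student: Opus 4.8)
The plan is to read off \textsc{Tower}-completeness directly from the per-fragment bounds: Theorem~\ref{th:more_vars_time} supplies the upper bound, and the known per-fragment lower bound for the plain fluted fragment supplies hardness. No further model-theoretic construction is required; the only thing to verify is that the two directions fit together as a \textsc{Tower} upper and lower bound.

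\textbf{Membership in \textsc{Tower}.} Given an $\FLPC$-sentence $\varphi$, let $\ell{+}1$ be the number of distinct variables occurring in it; every such variable is among the symbols of $\varphi$, so $\ell{+}1 \leq \sizeof{\varphi}$. By Theorem~\ref{th:more_vars_time} the (finite) satisfiability of $\varphi$ is decidable in $\ell$-\textsc{NExpTime}, i.e.\ within a time bound obtained by stacking $O(\ell)$ exponentials -- a tower of exponentials of height $O(\sizeof{\varphi})$, which is an elementary function of the input length. By definition this places the (finite) satisfiability problem for $\FLPC$ in \textsc{Tower}.

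\textbf{\textsc{Tower}-hardness.} I would reduce from a canonical \textsc{Tower}-complete problem, e.g.\ the halting problem of a deterministic Turing machine on the empty input within a time bound given by a tower of exponentials of height $h$, with $h$ supplied in unary as part of the input. The reduction of \cite{phst19} witnessing $\lfloor(\ell{+}1)/2\rfloor$-\textsc{NExpTime}-hardness of (finite) satisfiability for $\FL^{\ell{+}1}$ is uniform in $\ell$: from an instance of a $k$-\textsc{NExpTime}-bounded computation it produces, in time elementary in the instance size, an equi-(finitely-)satisfiable $\FL^{O(k)}$-sentence. Letting $\ell$ scale with the prescribed height $h$ yields a reduction of the \textsc{Tower}-complete problem into $\FL$-(finite-)satisfiability, and since $\FL \subseteq \FLPC$ the (finite) satisfiability problem for $\FLPC$ is \textsc{Tower}-hard. (Equivalently, as already remarked in the text: since for every fixed $k$ there is an $\ell$ with $\FL^{\ell{+}1}$-satisfiability $k$-\textsc{NExpTime}-hard, the unbounded-variable problem is not in $k$-\textsc{NExpTime} for any $k$, hence non-elementary, and a non-elementary problem that lies in \textsc{Tower} is \textsc{Tower}-complete.)

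\textbf{Main obstacle.} There is essentially no obstacle left once Theorem~\ref{th:more_vars_time} is in hand; the one point needing care is confirming that the hardness argument inherited from \cite{phst19} is genuinely a \textsc{Tower} reduction -- i.e.\ that its construction is parametric in the number of variables and enlarges the instance by only an elementary amount, so that composing it with the source \textsc{Tower}-complete problem stays within the reductions admissible for \textsc{Tower}. This is immediate from the explicit sentences built there, so the proof is a short appeal to Theorem~\ref{th:more_vars_time} together with the parametric lower bound.
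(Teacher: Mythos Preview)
Your proposal is correct and follows essentially the same approach as the paper: the upper bound comes from Theorem~\ref{th:more_vars_time}, and hardness is inherited from the $\lfloor(\ell{+}1)/2\rfloor$-\textsc{NExpTime} lower bound for $\FL^{\ell{+}1}$ in \cite{phst19}. If anything, you are more careful than the paper, which simply observes that no elementary bound suffices and declares \textsc{Tower}-completeness, whereas you explicitly address why the parametric reduction from \cite{phst19} qualifies as a \textsc{Tower} reduction.
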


\section{Counting With Reversed Relations}\label{sec_undec}

In this section we will show that relaxing the syntactic restrictions of the fluted fragment with counting yields undecidability of satisfiability.
We define the language $\FL_{\text{rev}}$ to be $\FL$ but with the addition of atoms with reversed variable sequences.
More formally, if $r(x_k, \dots, x_\ell)$ is an $\FL$-atom, then $r(x_k, \dots, x_\ell)$ and $r(x_\ell, \dots, x_k)$ are 
$\FL_{\text{rev}}$-atoms. The language $\FLC_{\text{rev}}$ ($\FLPC_{\text{rev}}$) is then the obvious extension of $\FL_{\text{rev}}$ with (periodic)
counting quantifiers. Clearly, the languages $\FL_{\text{rev}}$, $\FLC_{\text{rev}}$, and $\FLPC_{\text{rev}}$ are subfragments of the adjacent fragment
with the appropriate counting extensions.
We will use counting quantifiers $\exists_{[=1]}$ and $\exists_{[\leq 1]}$ with the meanings ``there is exactly one element s.t. ...'' and 
``there is at most one element s.t. ...'' along side periodic counting qunatifiers.
For simplicity, we do away with variable-free notation and use variable sequences of $x, y, z$ and $z, y, x$ in place of $x_1, x_2, x_3$.

We proceed by reducing \textit{Hilbert's 10th problem} to the finite satisfiability problem of $\FLC^3_{\text{rev}}$.
    Let $\mathcal{E}$ be a system of Diophantine equations. We assume that each equation $e \in \mathcal{E}$
    is of one of the following (simple) forms:
    \begin{enumerate*}[label=(\roman*)]
        \item $u = 1$,\label{hilb_1}
        \item $u + v = w$, or\label{hilb_2}
        \item $u \cdot v = w$,\label{hilb_3}
    \end{enumerate*}
    where $u, v, w$ are mutually disjoint variables.
    Clearly, no loss of generality occurs as any (non-simple) Diophantine equation can be rewritten into the simpler form by introducing new
    variables. For each $e \in \mathcal{E}$ we will define a formula $\varphi_e$ depending on the form that $e$ takes.
    Then, $\varphi := \bigwedge_{e \in \mathcal{E}} \varphi_e \wedge \psi$ will be the advertised formula that is finitely satisfiable if and only if $\mathcal{E}$ has
    a solution over $\mathbb{N}$. We specify that the signature of $\varphi$ includes
    \begin{enumerate*}[label=(\arabic*)]
        \item unary predicates $A_u$ for each variable of $u$ in $\mathcal{E}$,
        \item binary predicates $R_e$ for each $e \in \mathcal{E}$ of the form \ref{hilb_2}, and
        \item ternary predicates $P_e$ for each $e \in \mathcal{E}$ of the form \ref{hilb_3}.
    \end{enumerate*}
    We will not assume that the equality predicate is available.
    In the sequel we will argue that if $\fA \models \varphi$, then
    $\mathcal{E}$ has a solution with $u \mapsto |A_u^\fA|$ for each variable $u$. (And, of course, the converse as well). For technical reasons we wish for the sets
    $A_u^\fA$ and $A_v^\fA$ with $u \neq v$ to be disjoint. Denoting $\text{vars}(\mathcal{E})$ for the set of variables in $\mathcal{E}$, we first define
    $
       \psi :=         \bigwedge_{u, v \in \text{vars}(\mathcal{E})}^{u \neq v} \forall x  \big( \lnot A_u(x) \vee \lnot A_v(x) \big),
    $
    which clearly has the required effect.
    We proceed by taking $e \in \mathcal{E}$ in turn.

    Suppose first that $e$ is of the form \ref{hilb_1} $u = 1$.
    We ensure that every model $\fA$ of $\varphi$ will have $|A_u^\fA| = 1$ by defining $\varphi_e$ to be
    $
    \exists_{[=1]} x \; A_u(x).
    $

    Now, supposing that $e$ is of the form \ref{hilb_2} $u + v = w$,
    we define $\varphi_e$ with the intent that models $\fA$ of $\varphi$ will have $R_e^\fA$ being a bijection between $A_u^\fA \cup A_v^\fA$ and $A_w^\fA$ (i.e. $|A_u^\fA| + |A_v^\fA| = |A_w^\fA|$):
    \begin{align*}
        \forall x &\Big( \big( A_u(x) \vee A_v(x) \big) \to \exists_{[=1]} y \big( A_w(y) \wedge R_e(xy) \big) \Big) \wedge \mbox{}\\
        \forall y  &\Big( A_w(y) \to \exists_{[=1]} x  \big( ( A_u(x) \vee A_v(x) ) \wedge R_e(xy) \big) \Big).
    \end{align*}

    Lastly, if $e$ is of the form \ref{hilb_3} $u \cdot v = w$, then $\varphi_e$ (defined just below) will guarantee that
    $\fA \models \varphi$ forces $P_e^\fA$ to be a bijection between $A_u^\fA \times A_v^\fA$ and $A_w^\fA$ (i.e. $|A_u^\fA| \cdot |A_v^\fA| = |A_w^\fA|$):
    \begin{equation*}
        \begin{split}
            \forall x &\Big( A_u(x) \to \forall y \big( A_v(y) \to
            \exists_{[=1]} z  ( A_w(z) \wedge P_e(xyz) ) \big) \Big) \wedge \mbox{} \\
            \forall z  &\Big( A_w(z) \to \exists_{[=1]} y  \big( A_v(y) \wedge 
            \exists x  ( A_u(x) \wedge P_e(xyz) ) \big) \Big)  \wedge \mbox{} \\
            \forall z  & \Big( A_w(z) \to \exists y  \big( A_v(y) \wedge
            \exists_{[=1]} x  ( A_u(x) \wedge P_e(xyz) ) \big) \Big).
        \end{split}
    \end{equation*}

    It is then straightforward to show (see Appendix~\ref{app:undec_missing}) that $\varphi$ is finitely satisfiable iff $\mathcal{E}$
    has a solution over $\mathbb{N}$.
    Noting that the problem of finding solutions to Diophantine equations over $\mathbb{N}$ is $\Sigma^0_1$-complete,
    and that $\varphi \wedge \exists_{[0^{+1}]} x \; \top$ is an $\FLPC^3_{\text{rev}}$-sentence that is satisfiable if and only if
    $\varphi$ is finitely satisfiable,
    we conclude the following:
    \begin{theorem} \label{th_FL_rev_fin_undec}
        The finite satisfiability problem for $\FLC^3_{\text{rev}}$ is $\Sigma^0_1$-hard.
        If periodic counting is permitted, then so is the general satisfiability problem.
    \end{theorem}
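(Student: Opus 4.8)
The sentence $\varphi=\bigwedge_{e\in\mathcal{E}}\varphi_e\wedge\psi$ built above belongs to $\FLC^3_{\text{rev}}$ and is computable from $\mathcal{E}$ in polynomial time. The plan is to verify the claimed equivalence, \emph{$\varphi$ is finitely satisfiable iff $\mathcal{E}$ has a solution over $\mathbb{N}$}, and then to read off both parts of the theorem.

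For the ``easy'' direction, suppose $u\mapsto n_u$ is a solution of $\mathcal{E}$. I would take as domain the disjoint union of sets $A_u$ with $|A_u|=n_u$ (adding one element outside every $A_u$ if this union is empty, so that the domain is non-empty), interpret each unary predicate $A_u$ as the corresponding set, interpret each $R_e$ (for $e$ of additive type) as the graph of a bijection $A_u\cup A_v\to A_w$, and each $P_e$ (for $e$ of multiplicative type) as the graph of a bijection $A_u\times A_v\to A_w$ -- such bijections exist precisely because the $A_u$'s are pairwise disjoint (ensured by $\psi$) and $n_u+n_v=n_w$, resp.\ $n_u\cdot n_v=n_w$ -- and leave all other predicates empty. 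Checking $\fA\models\varphi$ is then a routine conjunct-by-conjunct inspection.

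The converse is where the content lies. Given a finite $\fA\models\varphi$, put $n_u:=|A_u^\fA|$, which is finite. An equation $u=1$ is handled immediately by $\exists_{[=1]}x\,A_u(x)$. For an additive equation the two conjuncts of $\varphi_e$ force $R_e^\fA$ to restrict to a bijection between $A_u^\fA\cup A_v^\fA$ and $A_w^\fA$ (totality and functionality on the source from the first conjunct, functionality and surjectivity from the second), and $A_u^\fA\cap A_v^\fA=\emptyset$ by $\psi$, so $n_u+n_v=n_w$. The multiplicative gadget is the crux, and the step I expect to be the main obstacle: since equality is unavailable and each atom may only follow the quantification order or its reverse, one cannot directly assert ``there is exactly one pair $(x,y)$'', so one must show the three conjuncts of $\varphi_e$ together still pin down a bijection $A_u^\fA\times A_v^\fA\to A_w^\fA$. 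The first conjunct makes $P_e^\fA$ a function from $A_u^\fA\times A_v^\fA$ to $A_w^\fA$. Reading the second conjunct along the reversed sequence $z,y,x$ shows each $z\in A_w^\fA$ has exactly one ``active'' second coordinate $y\in A_v^\fA$ -- one for which some $x\in A_u^\fA$ satisfies $P_e(xyz)$. The third conjunct, also reversed, supplies for each $z$ \emph{some} $y$ with exactly one such $x$; this $y$ must coincide with the unique active one, so every $z\in A_w^\fA$ has a single preimage in $A_u^\fA\times A_v^\fA$. With the first conjunct this makes $P_e^\fA$ restrict to a bijection, whence $n_u\cdot n_v=n_w$; thus $u\mapsto n_u$ solves $\mathcal{E}$.

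Finally I would wrap up. Since solvability of systems of Diophantine equations over $\mathbb{N}$ (Hilbert's tenth problem) is $\Sigma^0_1$-complete and $\mathcal{E}\mapsto\varphi$ is a computable reduction into the finite satisfiability problem of $\FLC^3_{\text{rev}}$, that problem is $\Sigma^0_1$-hard. For the second assertion, note $0^{+1}=\mathbb{N}$, so any model of the $\FLPC^3_{\text{rev}}$-sentence $\varphi\wedge\exists_{[0^{+1}]}x\,\top$ has a domain of cardinality in $\mathbb{N}$, i.e.\ finite; hence this sentence is generally satisfiable iff $\varphi$ is finitely satisfiable. Composing, $\mathcal{E}\mapsto\varphi\wedge\exists_{[0^{+1}]}x\,\top$ is a computable reduction from Hilbert's tenth problem to the general satisfiability problem of $\FLPC^3_{\text{rev}}$, which is therefore $\Sigma^0_1$-hard as well.
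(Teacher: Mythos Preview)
Your proposal is correct and follows essentially the same approach as the paper: the same Diophantine encoding, the same conjunct-by-conjunct verification (your handling of the multiplicative gadget---unique active $y$ from the second conjunct, then unique $x$ for that $y$ from the third---is exactly the paper's injectivity argument phrased contrapositively), and the same $\exists_{[0^{+1}]}x\,\top$ trick for the periodic-counting clause. Your extra care about adding a dummy element when the union $\bigcup_u A_u$ is empty is a detail the paper glosses over.
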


We note that one can use the same type of argument as above
when reducing from the problem of solving Diophantine equations $\mathcal{E}$ over $\mathbb{N}^* = \{1, 2, \dots\} \cup \{ \aleph_0 \}$
to the general satisfiability problem of $\FLC+\cdot^{-1}$.
Such an approach, however, is not fruitful for determining undecidability as the problem of finding solutions to $\mathcal{E}$ over $\mathbb{N}^*$ is in \textsc{NPTime} \cite{dioph_infinity_decidable}.
Thus, to show undecidability of general satisfiability, we resort to the tiling problem.
Take $\Phi = \langle \mathcal{T}, \mathcal{H}, \mathcal{V} \rangle$ with $\mathcal{H}, \mathcal{V} \subseteq \mathcal{T} \times \mathcal{T}$.
We will produce and $\FLC^4_{\text{rev}}$-sentence $\varphi$ that
is satisfiable if and only if $\Phi$ tiles the infinite $\NN$ plane in accordance to the horizontal ($\mathcal{H}$) and vertical~($\mathcal{V}$) constraints.
Additionally, we will argue that one can append additional $\FLPC^2_{\text{rev}}$ conjuncts to
$\varphi$ and thus obtain a reduction from the tiling problem with a designated tile recurring infinitely often on the first column (see \cite{HAREL198551} for details about the problem).
All-in-all, such reductions will guarantee that $\FLC^4_{\text{rev}}$ is $\Pi^0_1$-hard, 
whilst making $\FLPC^4_{\text{rev}}$ hard for $\Sigma^1_1$.

Take $G$ to be unary and $H$, $V$ to be binary predicate symbols.
We define the \textit{canonical \NNs-grid} to be a $\{ G, H, V \}$-structure $\fG$ over the domain $\NN$ with the following extensions:
\begin{itemize}
    \item $G^\fA := \NN$,
    \item $H^\fA := \{ \langle (i, j), (i, j{+}1) \rangle \mid i, j \in \N \}$, and
    \item $V^\fA := \{ \langle (i, j), (i{+}1, j) \rangle \mid i, j \in \N \}$.
\end{itemize}
We say that a structure $\fA$ is a \textit{$\NNs$-grid} if $\fA$ restricted to elements $G^\fA$ and the signature $\{ G, H, V \}$ is isomorphic to the canonical $\NNs$-grid.
More leniently, $\fA$ is \textit{grid-like} if it contains a homomorphic embedding of~$\fG$.
It is well known that the satisfiability problem posed over subclasses of grid-like structures is undecidable for even inexpressive logics such as $\FL^2$.
The following is almost immediate:
\begin{lemma} \label{lemma_undec_tiling_easy_part}
    The satisfiability problem for $\FLC^2_{\text{rev}}$ posed over subclasses of grid-like structures is $\Pi^0_1$-hard.
    The satisfiability problem for $\FLPC^2_{\text{rev}}$ posed over subclasses of \NNs-grids is $\Sigma^1_1$-hard.
\end{lemma}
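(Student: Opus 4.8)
The plan is to give, for each half of the statement, a direct reduction from a standard undecidable tiling problem, exploiting the fact that over (subclasses of) grid-like structures the grid relations $G$, $H$, $V$ are supplied by the ambient class, so that only the tiling rules --- and, for the second half, a recurrence condition on the first column --- need to be expressed in the logic. For the first half I reduce from the problem ``does the system $\langle \mathcal{T}, \mathcal{H}, \mathcal{V}\rangle$ tile the $\NN$ plane?'', which is $\Pi^0_1$-complete; for the second, from the $\Sigma^1_1$-complete recurrence tiling problem of \cite{HAREL198551}, i.e.\ ``\dots\ tile the $\NN$ plane with a distinguished tile $t_0$ occurring infinitely often in the first column''.

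For the first half, fix $\Phi = \langle\mathcal{T},\mathcal{H},\mathcal{V}\rangle$ and introduce a fresh unary predicate $T_t$ for each $t \in \mathcal{T}$. Let $\varphi_\Phi$ be the conjunction of: (i) $\forall x_1 \bigl(G(x_1) \to \bigvee_{t \in \mathcal{T}}(T_t(x_1) \wedge \bigwedge_{t' \neq t} \neg T_{t'}(x_1))\bigr)$; (ii) $\forall x_1 \bigl(G(x_1) \to \bigwedge_{t \in \mathcal{T}}(T_t(x_1) \to \forall x_2(H(x_1,x_2) \to \bigvee_{(t,t')\in\mathcal{H}} T_{t'}(x_2)))\bigr)$; and (iii) the copy of (ii) with $V$, $\mathcal{V}$ in place of $H$, $\mathcal{H}$. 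A direct inspection of the grammar shows $\varphi_\Phi \in \FL^2 \subseteq \FLC^2_{\text{rev}}$: at the level of $x_2$ only the fluted atoms $H(x_1,x_2)$, $V(x_1,x_2)$ and the unary atoms $T_{t'}(x_2)$ occur, while at the level of the sole free variable $x_1$ only unary atoms occur, and no third variable is needed. If $\fA$ is grid-like and $\fA \models \varphi_\Phi$, fix a homomorphic embedding $h \colon \fG \to \fA$ and colour each position $(i,j)$ of the plane by the (by (i) unique) tile of $h(i,j)$; since $h$ sends horizontal and vertical adjacencies of $\fG$ to $H$- and $V$-edges between $G$-elements of $\fA$, conjuncts (ii) and (iii) make this colouring a correct tiling. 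Conversely, a tiling of $\NN$ expands the canonical \NNs-grid $\fG$ --- itself trivially grid-like --- to a model of $\varphi_\Phi$ by interpreting each $T_t$ as the set of positions carrying $t$. Hence $\varphi_\Phi$ is satisfiable over grid-like structures iff $\Phi$ tiles the $\NN$ plane, which establishes $\Pi^0_1$-hardness.

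For the second half I keep $\varphi_\Phi$ and add two conjuncts. Using a fresh unary predicate $C$ and the reversed atom $H(x_2,x_1)$, I take $\varphi^C := \forall x_1 \bigl(G(x_1) \to (C(x_1) \leftrightarrow \neg\exists x_2(G(x_2) \wedge H(x_2,x_1)))\bigr)$; over an \NNs-grid this forces $C$ to hold of exactly the grid elements with no $H$-predecessor, that is, the first column. I then force $t_0$ to recur there infinitely often with $\varphi^{\mathrm{rec}} := \neg\exists_{[0^{+1}]} x_1\,(G(x_1) \wedge C(x_1) \wedge T_{t_0}(x_1))$: since $0^{+1} = \N$, the quantifier $\exists_{[0^{+1}]}x_1\,\chi$ holds precisely when $\{a \mid \chi(a)\}$ is finite, so its negation asserts that this set is infinite. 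The reversed atom $H(x_2,x_1)$ is what takes us out of $\FLC$ into $\FLC_{\text{rev}}$, and the periodic quantifier is what takes us into $\FLPC$; altogether $\varphi_\Phi \wedge \varphi^C \wedge \varphi^{\mathrm{rec}} \in \FLPC^2_{\text{rev}}$. Because over an \NNs-grid $\fA$ the $\{G,H,V\}$-reduct of the substructure on $G^\fA$ is \emph{isomorphic} to $\fG$ (not merely a homomorphic image), the tile map of any model of this sentence transports along the isomorphism to a genuine tiling of the $\NN$ plane in which $t_0$ occurs at infinitely many first-column positions; the converse direction is exactly as in the first half. This gives $\Sigma^1_1$-hardness over \NNs-grids.

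The only point requiring care in all of this is the syntactic bookkeeping --- checking that each conjunct respects the fluted discipline and stays within two variables, in particular that the tile of the ``earlier'' variable is never needed inside a binary subformula, and that detecting membership in the first column is exactly what the reversed atoms are for. No genuine obstacle arises here, precisely because the grid relations are handed to us by the ambient class of structures: the real difficulty of pinning a model down to be grid-like inside a fluted language with counting is deferred to the constructions that follow, from which the remaining hardness results of this section (e.g.\ Theorem~\ref{th_FL_rev_fin_undec}) draw their strength.
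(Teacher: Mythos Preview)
Your proof is correct and is precisely the standard argument the paper gestures at when it calls the lemma ``almost immediate'' (the paper gives no proof). The tiling encoding in $\FL^2$, the use of a reversed atom to pick out the first column, and the use of $\neg\exists_{[0^{+1}]}$ to force infinite recurrence are exactly the ingredients one expects; your syntactic bookkeeping is also right.

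One small point worth making explicit, since it matters for how the lemma is later applied: the phrase ``posed over subclasses'' is meant to cover any nonempty subclass that contains (an expansion of) the canonical grid --- in the paper's application the subclass is the class of models of the grid-axiomatising sentence $\varphi$, which contains $\fG^\#$ rather than $\fG$ itself. Your converse direction (``expand $\fG$ with tiles'') should therefore be read as ``expand any fixed member of the subclass that sits over $\fG$''; since your formulas $\varphi_\Phi$, $\varphi^C$, $\varphi^{\mathrm{rec}}$ only constrain the fresh predicates $T_t$ and $C$ relative to $G$, $H$, $V$, this goes through unchanged.
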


The lemma above is, of course, the ``easy'' part of a much larger reduction.
The axiomatisation of grid-like structures and $\NNs$-grids is where the expressive power of $\FLC^4_{\text{rev}}$ and $\FLPC^4_{\text{rev}}$ is needed.
Before writing the advertised formulas, we build the motivating structure we will be looking for in three steps. Suppose $\fG$ is the canonical \NNs-grid.
Letting $E_H$, $E_V$ be binary and $O$ be unary predicate symbols we define the \textit{graphed expansion of $\fG$} to be the structure
$\fG^+$ over the domain $\NNs \cup \N$ with the following extensions:
\begin{itemize}
    \item $\fG^+ \restr_{\NN} := \fG$,
    \item $k \not\in G^{\fG^+}$ for each $k \in \N$,
    \item $k \in O^{\fG^+}$ if and only if $k = 0$,
    \item $E_H^{\fG^+} := \bigcup_{i, j \in \N} \{ \langle (i, j), k \rangle \;\vert\; 1 \leq k \leq i \}$, and
    \item $E_V^{\fG^+} := \bigcup_{i, j \in \N} \{ \langle (i, j), k \rangle \;\vert\; 1 \leq k \leq j \}$.
\end{itemize}
Intuitively, $\fG^+$, when restricted to $\NN = G^{\fG^+}$, is the canonical $\NNs$-grid.
Notice that each $(i, j) \in \NN$ has $i$ elements in $\N$ that are $E_H$-successors and $j$ elements that are $E_V$-successors.
In other words, the coordinates of $(i, j)$ are explicitly encoded in $\fG^+$ as the out-degrees of $E_H$ and $E_V$ respectively.
(In the future, we will simply speak of $E_H$- and $E_V$-degree with ``out'' being left implicit).
We invite the reader to regard $\N$ as the set of extra elements which help encode positions of grid elements.
Notice that the singleton $0 \in O^{\fG^+}$ is not featured in any binary relations (most notably, $E_H$ and $E_V$). This is deliberate as it will act as a \textit{spare part}
in the constructions to come.

We now define \textit{the mapped expansion $\fG^\star$ of $\fG^+$}, where $\fG^+$ itself is the graphed expansion of $\fG$.
For this, we introduce quaternary $R_H$, $R_V$, $S_H$, $S_V$ and ternary predicates $C_H$, $C_V$, whilst setting the following extensions:
\begin{itemize}
    \item $R_H^{\fG^\star} := \bigcup_{i, j, i', j' \in \N}^{i \leq i'} \{ \langle k, (i, j), (i', j'), k \rangle \;\vert\; 1 \leq k \leq i \},$
    \item $R_V^{\fG^\star} := \bigcup_{i, j, i', j' \in \N}^{j \leq j'} \{ \langle k, (i, j), (i', j'), k \rangle \;\vert\; 1 \leq k \leq j \},$
    \item $S_H^{\fG^\star} := \bigcup_{i, j, j' \in \N} \{ \langle k-1, (i, j), (i{+}1, j'), k \rangle \;\vert\; 1 \leq k \leq i + 1 \},$
    \item $S_V^{\fG^\star} := \bigcup_{i, j, i' \in \N} \{ \langle k-1, (i, j), (i', j{+}1), k \rangle \;\vert\; 1 \leq k \leq j + 1 \},$
    \item $C_H^{\fG^\star} := \bigcup_{i, j, i', j' \in \N}^{i \leq i'} \{ \langle k, (i', j'), (i, j) \rangle \;\vert\; 1 \leq k \leq i \},$ and
    \item $C_V^{\fG^\star} := \bigcup_{i, j, i', j' \in \N}^{j \leq j'} \{ \langle k, (i', j'), (i, j) \rangle \;\vert\; 1 \leq k \leq j \}$.
\end{itemize}
Recall that each $(i, j) \in \NN$ sends $E_H$-edges to each $k \in [1, i]$.
Fixing some $(i', j') \in \NN$ with $i \leq i'$, we have that $R_H^{\fG^\star}$
injectively maps $E_H$-edges originating from $(i, j)$ to $E_H$-edges of $(i', j')$. On the other hand,
$S_H^{\fG^\star}$ is a bijection between the $E_H$-edges of $(i, j)$ with the spare part $0$ and $E_H$-edges of $(i{+}1, j')$.
The relation $C_H^{\fG^\star}$ simply remembers which $E_H$-edges of $(i', j')$ are mapped to $E_H$-edges of $(i, j)$ via $R_H^\fA$. Relations
$R_V^{\fG^\star}$, $S_V^{\fG^\star}$ and $C_V^{\fG^\star}$ act similarly.

Lastly,
We say that $\fG^\#$ \textit{is the ordered expansion} of $\fG^\star$, where $\fG^\star$ itself is the graphed and mapped expansion of $\fG$,
if the signature contains two additional relations
$\preceq_H$ and $\preceq_V$ which we will define to be total orders over $\NN$.
For motivational purposes, we will forget that grid elements $a$ and $b$ are pairs of natural numbers
and instead focus on the $E_H$- and $E_V$-degrees of the elements.
In $\fG^\#$ we will have that
\begin{itemize}
    \item $a \preceq^{\fG^\#}_H b$ if and only if the $E_H$-degree of $a$ is no more than that of $b$, and
    \item $a \preceq^{\fG^\#}_V b$ if and only if the $E_V$-degree of $a$ is no more than that of $b$.
\end{itemize}
(Again, the $E_H$- and $E_V$-degrees encode the horizontal and vertical positions of the element).

We will now write the sentence $\varphi := \varphi_1 \wedge \cdots \wedge \varphi_{13}$ one conjunct at a time.
At a high level, the conjuncts simply state facts about the graphed mapped and ordered expansion $\fG^\#$ of $\fG$.
In the sequel we will argue that the satisfaction of $\varphi$ by $\fA$ is sufficient to deduce 
that the structure in question is grid-like.
For readability, we will be using variable sequences $x, y, z, w$ and $w, z, y, x$ instead of $x_1, x_2, x_3, x_4$.
Strictly speaking, the formulas to be defined are not (reverse) fluted, but can be made such by moving quantifiers inwards. 

Fix $\fG^\#$ to be as described above.
We first capture some graphed properties.
Recall that in $\fG^\#$ there is a single spare part element $0 \in O^{\fG^\#}$.
Noting that this element is not part of the grid ($0 \not\in G^{\fG^\#}$)
we have that $\fG^\#$ models:
\begin{equation}\tag{$\varphi_1$}\label{eq_FLPCrev_gen_sat_undec_1}
    \exists_{[=1]} x \; O(x) \wedge \forall x \big(O(x) \to \neg G(x) \big)
\end{equation}
Additionally, recall that the spare part $0$ has no incoming edges $E_H$- or $E_V$-edges in $\fG^{\#}$.
Thus, $\fG^\#$ also models:
\begin{equation}\tag{$\varphi_2$}\label{eq_FLPCrev_gen_sat_undec_2}
    \forall x \Big(O(x) \to \forall y \big(\neg E_H(yx) \wedge \neg E_V(yx) \big) \Big)
\end{equation}

Moving to grid elements, we see that there is a single element in $G^{\fA^\#}$ with no $E_H$- or $E_V$-degree.
Thus, $\fG^\#$ is a model of:
\begin{equation}\tag{$\varphi_3$}\label{eq_FLPCrev_gen_sat_undec_3}
    \exists_{[=1]} x \Big( G(x) \wedge \forall y \big(\neg E_H(xy) \wedge \neg E_V(xy) \big) \Big).
\end{equation}
Additionally, each element in $G^{\fG^\#}$ has a single $H$- and $V$-successor.
Thus, $\fG^\#$ models: 
\begin{equation}\tag{$\varphi_4$}\label{eq_FLPCrev_gen_sat_undec_4}
    \forall x \bigg(G(x) \to \Big(\exists_{[=1]} y \big(H(xy) \wedge G(y) \big) \wedge \exists_{[=1]} y \; \big(V(xy) \wedge G(y) \big) \Big) \bigg).
\end{equation}

For the next two conjuncts fix $(i, j) \in G^{\fG^\#}$.
Notice that the $H$-successor $(i{+}1, j)$ has an $E_H$-degree that is larger by 1 when compared
to its predecessor $(i, j)$. We can thus map the $E_H$-edges from
$(i{+}1, j)$ to the set of $E_H$-edges from $(i, j)$ taken together with the spare part element bijectively. This is exactly how the extension to
$S_H$ in $\fG^\#$ is set up. Noting that $V$-successors have analogous properties we conclude that $\fG^\#$ models the following sentences:
    \begin{align}
        &\bigwedge_{X \in\{H,V\}} \hspace{-4mm} \forall  x y z \Big( \big((E_X(yx) \vee O(x)) \wedge X(yz) \big) {\to}
                \exists_{[=1]} w \big(E_X(zw) \wedge S_X(xyzw) \big) \Big), \tag{$\varphi_5$}\label{eq_FLPCrev_gen_sat_undec_5} \\
        &\bigwedge_{X \in\{H,V\}} \hspace{-4mm} \forall wzy \Big( \big(E_X(zw) \wedge X(yz) \big) {\to}
        \exists_{[=1]} x \big((E_X(yx) \vee O(x)) \wedge S_X(xyzw) \big) \Big). \tag{$\varphi_6$}\label{eq_FLPCrev_gen_sat_undec_6}
    \end{align}

Recall that for grid elements $(i, j), (i', j') \in G^{\fG^\#}$ we have $\fG^\# \models (i, j) \preceq_V (i', j')$
if and only if the $E_V$-degree of $(i, j)$ is no more than that of $(i', j')$.
Fixing $(i, j)$ and its $H$-successor $(i{+}1, j)$ we see that $\fG^\# \models (i, j) \preceq_V (i{+}1, j) \wedge (i{+}1, j) \preceq_V (i, j)$.
That is, $(i, j)$ and its $H$-successor have the same $E_V$-degrees. Thus, $\fG^\#$ models:
\begin{equation}\tag{$\varphi_7$}\label{eq_FLPCrev_gen_sat_undec_7}
    \bigwedge_{X,Y\in\{H,V\}}^{X \neq Y} \forall xy \Big( X(xy) \to \big(x \preceq_{Y} y \wedge y \preceq_{Y} x \big) \Big).
\end{equation}

Now, recall that the ordering $\preceq_X$ ($X \in \{H, V\}$) is total on $G^{\fG^\#}$. Thus, $\fG^\#$ models:
\begin{equation}\tag{$\varphi_8$}\label{eq_FLPCrev_gen_sat_undec_8}
    \bigwedge_{X \in\{H,V\}} \forall xy \; \Big( \big( G(x) \wedge G(y) \big) \to \big( x \preceq_X y \vee y \preceq_X x \big) \Big).
\end{equation}

Taking $(i, j), (i', j') \in G^{\fG^\#}$ with $i \leq i'$
recall that the elements have $E_H$-edges
mapped injectively by $R_H^{\fG^\#}$. Thus, $\fG^\# \models (i, j) \preceq_H (i', j')$ if and only if
$R_H^{\fG^\#}$ is an injection between the $E_H$-edges of $(i, j)$ and  that of $(i', j')$.
We capture the ``only-if'' direction of the dependency with the sentences  \ref{eq_FLPCrev_gen_sat_undec_9} and  \ref{eq_FLPCrev_gen_sat_undec_10}, whilst
the ``if'' direction is handled by  \ref{eq_FLPCrev_gen_sat_undec_11} and  \ref{eq_FLPCrev_gen_sat_undec_12}.

Still holding the supposition that $\fG^\# \models (i, j) \preceq_H (i', j')$,
we write a sentence ensuring that each $E_H$-edge from $(i, j)$ is mapped to some single $E_H$-edge from $(i', j')$:
\begin{equation}\tag{$\varphi_9$}\label{eq_FLPCrev_gen_sat_undec_9}
    \bigwedge_{X \in\{H,V\}} \forall x y z \Big( \big( y \preceq_X z \wedge E_X(yx) \big) \to
                \exists_{[=1]} w \big(E_X(zw) \wedge R_X(xyzw) \big) \Big).
\end{equation}
With the next sentence we require that each $E_H$-edge from $(i', j')$ is a witness (in regard to $R_H^{\fG^\#}$) to at most a single $E_H$-edge from $(i, j)$:
\begin{equation}\tag{$\varphi_{10}$}\label{eq_FLPCrev_gen_sat_undec_10}
    \bigwedge_{X \in\{H,V\}} \forall w z y \Big( \big( y \preceq_X z \wedge E_X(zw) \big) \to
                \exists_{[\leq1]} x \big(E_X(yx) \wedge R_X(xyzw) \big) \Big).
\end{equation}
It is easy to verify that this is indeed how $R_H^{\fG^\#}$ is set up.
Noting that $\preceq_V$ and $R_V$ behave symmetrically we conclude $\fG^\# \models \text{\ref{eq_FLPCrev_gen_sat_undec_9}} \wedge \text{\ref{eq_FLPCrev_gen_sat_undec_10}}$.

For the converse direction of the implication take any $(i, j), (i', j') \in G^{\fG^\#}$ and recall that $\langle k, (i', j'), (i, j) \rangle \in C_H^{\fG^\#}$
if and only if there is some $k' \in \N$ for which $\langle k', (i, j), (i', j'), k \rangle \in R_H^{\fG^\#}$.
In other words, $C_H^{\fG^\#}$ remembers which $E_H$-edges from $(i', j')$ are featured in a mapping (by $R_H^{\fG^\#}$) with $E_H$-edges from $(i, j)$.
We axiomatise this relationship as follows:
\begin{equation}\tag{$\varphi_{11}$}\label{eq_FLPCrev_gen_sat_undec_11}
    \bigwedge_{X \in\{H,V\}} \forall wzy \Big( C_X(wzy) \leftrightarrow \exists_{[=1]} x \big( E_X(yx) \wedge R_X(xyzw)  \big)  \Big).
\end{equation}
Utilising $C_H$ we can then test if each $E_H$-edge of $(i', j')$ is mapped to some $E_X$-edge of $(i, j)$
and, if that is indeed the case, require that the grid elements be related via $\preceq_H$ accordingly.
We do just that with $\varphi_{12}$:
\begin{equation}\tag{$\varphi_{12}$}\label{eq_FLPCrev_gen_sat_undec_12}
    \bigwedge_{X = H, V} \forall y z \Big( \big( G(y) \wedge G(z) \wedge \forall w (E_X(zw) \to C_X(wzy)) \big) \to z \preceq_X y \Big).
\end{equation}
Noting that $\preceq_V$, $R_V$ and $C_V$ behave similarly,
we have that $\fG^\# \models \text{\ref{eq_FLPCrev_gen_sat_undec_11}} \wedge \text{\ref{eq_FLPCrev_gen_sat_undec_12}}$.

Lastly, notice that there are no two grid elements that have the same $E_H$- and $E_V$-degrees.
Thus, $\fG^\#$ models the uniqueness requirement as given by $\varphi_{13}$:
\begin{equation}\tag{$\varphi_{13}$}\label{eq_FLPCrev_gen_sat_undec_13}
    \forall y \; \Big( G(y) \to \exists_{[=1]} z \; \big( \bigwedge_{X = H, V} ({y \preceq_X z} \wedge {z \preceq_X y}) \big) \Big).
\end{equation}

Notice that by stepping inside the realm of periodic counting, we may capture the fact that in $\fG^{\#}$
there are no transfinite positions by defining the sentence $\chi$ limiting the $E_H$- and $E_V$-degrees of elements to finite values:
\begin{equation}\tag{$\chi$}\label{eq_FLPCrev_gen_sat_undec_chi}
    \bigwedge_{X=H, V} \forall x \exists_{[0^{+1}]} y \; E_X(xy).
\end{equation}

Recalling that $\varphi := \varphi_1 \wedge \cdots \wedge \varphi_{13}$ we have showed the following:
\begin{lemma}\label{lemma_undec_tiling_hard_part_1}
    The graphed, mapped and ordered expansion of the canonical $\NNs$-grid is a model of $\varphi \wedge \text{\ref{eq_FLPCrev_gen_sat_undec_chi}}$.
\end{lemma}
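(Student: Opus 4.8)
The plan is to verify that the graphed, mapped and ordered expansion $\fG^\#$ of the canonical $\NNs$-grid $\fG$ satisfies each conjunct $\varphi_1, \dots, \varphi_{13}$ together with \ref{eq_FLPCrev_gen_sat_undec_chi}. Each of these checks is a direct unpacking of the extensions fixed in the definitions of $\fG^+$, $\fG^\star$ and $\fG^\#$; the informal remarks accompanying each conjunct already record the intended reason, so the work is just to make them precise. Throughout I would rely on the basic facts that the grid elements form $G^{\fG^\#} = \NN$, that the spare part is $O^{\fG^\#} = \{0\}$ and is not a grid element, and that each $(i,j) \in \NN$ has exactly $i$ many $E_H$-successors and $j$ many $E_V$-successors (all in $\N$, while the elements of $\N$ themselves have no $E_X$-successors).

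First I would dispatch the ``grid skeleton'' conjuncts \ref{eq_FLPCrev_gen_sat_undec_1}--\ref{eq_FLPCrev_gen_sat_undec_4} and \ref{eq_FLPCrev_gen_sat_undec_chi}, which are immediate from these facts: $0$ is the unique element of $O^{\fG^\#}$, it is not in $G^{\fG^\#}$, and it occurs as a target of neither $E_H$ nor $E_V$; the origin $(0,0)$ is the unique grid element with empty $E_H$- and $E_V$-out-neighbourhood; every grid element has exactly one $H$-successor and one $V$-successor, both again grid elements; and every element has only finitely many $E_X$-successors, so $|\{y \mid E_X(xy)\}| \in 0^{+1}$, which is \ref{eq_FLPCrev_gen_sat_undec_chi}.

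The substantive part concerns the auxiliary relations $S_H, S_V, R_H, R_V, C_H, C_V$ and the orders $\preceq_H, \preceq_V$, which I would treat using the single structural fact that an $X$-step from a grid element increments its $E_X$-degree by one and leaves its $E_Y$-degree ($Y \neq X$) unchanged. For \ref{eq_FLPCrev_gen_sat_undec_5}--\ref{eq_FLPCrev_gen_sat_undec_6}, for a grid element $a$ of $E_X$-degree $n$ and an $X$-successor $b$ (so of $E_X$-degree $n{+}1$), the relation $S_X^{\fG^\#}$ is exactly the bijection $k \mapsto k{+}1$ from $\{0,1,\dots,n\}$ --- the spare part together with the $E_X$-successors of $a$ --- onto $\{1,\dots,n{+}1\}$, the $E_X$-successors of $b$; the $\exists_{[=1]}$ conjuncts then express the two directions of this bijection. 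For \ref{eq_FLPCrev_gen_sat_undec_7}, an $X$-step preserves the $E_Y$-degree and hence the $\preceq_Y$-class; \ref{eq_FLPCrev_gen_sat_undec_8} is totality of the ``compare $E_X$-degrees'' preorder on $\NN$. For \ref{eq_FLPCrev_gen_sat_undec_9}--\ref{eq_FLPCrev_gen_sat_undec_10}, when $a \preceq_X b$ --- i.e.\ the $E_X$-degree $m$ of $a$ is at most the $E_X$-degree $m'$ of $b$ --- the relation $R_X^{\fG^\#}$ restricted to $a,b$ is the identity injection of $\{1,\dots,m\}$ into $\{1,\dots,m'\}$, so each $E_X$-successor of $a$ has a unique $R_X$-partner among those of $b$, and each $E_X$-successor of $b$ is such a partner for at most one $E_X$-successor of $a$. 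For \ref{eq_FLPCrev_gen_sat_undec_11}, the biconditional holds because $\langle k, b, a\rangle \in C_X^{\fG^\#}$ exactly when $k$ is an $E_X$-successor of both $a$ and $b$ with $E_X$-degree of $a$ at most that of $b$, which by the previous point is exactly when some (necessarily unique) $E_X$-successor of $a$ is mapped by $R_X$ onto the $E_X$-successor $k$ of $b$. Then \ref{eq_FLPCrev_gen_sat_undec_12} follows: if every $E_X$-successor $w$ of $z$ satisfies $C_X(wzy)$ then, taking $w$ to be the largest such (or arguing vacuously when $z$ has $E_X$-degree $0$), the $E_X$-degree of $z$ is at most that of $y$, i.e.\ $z \preceq_X y$. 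Finally \ref{eq_FLPCrev_gen_sat_undec_13} holds because a grid element is determined by the pair of its $E_H$- and $E_V$-degrees, so for each grid element $a$, $a$ itself is the unique grid element both $\preceq_H$- and $\preceq_V$-equivalent to $a$.

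The whole verification is routine; the one point demanding care is bookkeeping with the index conventions, since \ref{eq_FLPCrev_gen_sat_undec_6}, \ref{eq_FLPCrev_gen_sat_undec_10} and \ref{eq_FLPCrev_gen_sat_undec_11} are written with the reversed variable block $w,z,y,x$, so the argument positions of $S_X$, $R_X$ and $C_X$ must be aligned with the indices appearing in their defining unions, and the biconditional of \ref{eq_FLPCrev_gen_sat_undec_11} together with \ref{eq_FLPCrev_gen_sat_undec_12} has to be checked in both directions, not only the ``only if'' direction that motivated the definition of $R_X$. No step presents a genuine obstacle.
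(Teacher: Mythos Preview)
Your proposal is correct and follows essentially the same approach as the paper: the paper's ``proof'' of this lemma is precisely the running commentary accompanying the introduction of each conjunct $\varphi_1,\dots,\varphi_{13}$ and \ref{eq_FLPCrev_gen_sat_undec_chi}, where it is argued in turn that $\fG^\#$ satisfies it, and your verification unpacks the same definitions in the same way (with a slightly tidier grouping of the conjuncts). Your cautionary remark about the reversed variable blocks and the need to check both directions of the biconditional in \ref{eq_FLPCrev_gen_sat_undec_11} is apt but, as you note, presents no real difficulty.
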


We proceed with the other direction as follows:
\begin{lemma}\label{lemma_undec_tiling_hard_part_2}
    Suppose $\fA \models \varphi$. Then $\fA$ is a grid-like structure.
    In addition, if $\fA \models \text{\ref{eq_FLPCrev_gen_sat_undec_chi}}$, then $\fA$ is an $\NNs$-grid.
\end{lemma}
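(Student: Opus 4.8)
The plan is, from an arbitrary $\fA \models \varphi$, to build by induction a map $h\colon \NNs \to G^\fA$ and show it is a homomorphic embedding of the canonical grid $\fG$; under the extra axiom $\chi$ we will moreover show $h$ is onto $G^\fA$ and that no spurious $H$- or $V$-edges sit inside $G^\fA$, which upgrades the conclusion to an $\NNs$-grid. For $X \in \{H,V\}$ write $\deg_X(a) \in \N^*$ for the number of $E_X$-successors of $a$. The whole construction rests on maintaining the invariant that the two coordinates of the grid point $(i,j)$ are literally recorded inside $\fA$ as $\deg_H(h(i,j))$ and $\deg_V(h(i,j))$ (up to the indexing conventions in the definitions of $\fG^+$ and $\fG^\star$).

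First I would harvest the elementary facts. From $\varphi_1$ and $\varphi_2$ there is a unique spare part $o \in A \setminus G^\fA$ with no incoming $E_H$- or $E_V$-edge; from $\varphi_3$ a unique origin $o_0 \in G^\fA$ with $\deg_H(o_0) = \deg_V(o_0) = 0$; and from $\varphi_4$ every $G$-element has a unique $H$-successor and a unique $V$-successor inside $G^\fA$. Next the degree-increment lemma: inspecting $\varphi_5$ and $\varphi_6$, the two implications are mutually inverse, so $S_X$ is a bijection between the $E_X$-successors of $y$ together with $o$ and the $E_X$-successors of the relevant successor $z$ of $y$; since $o$ is never an $E_X$-successor this yields $\deg_X(z) = \deg_X(y) + 1$, while $\varphi_7$ forces the other degree to be preserved when passing to a successor. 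For the orderings, instantiating $\varphi_8$ at $x = y$ gives reflexivity of $\preceq_H,\preceq_V$ on $G^\fA$; $\varphi_9$ and $\varphi_{10}$ make $R_X$ an injection of $E_X$-successors whenever $y \preceq_X z$, hence $y \preceq_X z$ implies $\deg_X(y) \le \deg_X(z)$; and $\varphi_{11}$–$\varphi_{12}$ give the converse when the degrees are finite (a finite injection between equal-size sets is onto, so $C_X$ becomes total in the relevant argument and $\varphi_{12}$ fires). Consequently, for $G$-elements of finite $E_X$-degree, $a \preceq_X b \wedge b \preceq_X a$ holds iff $\deg_X(a) = \deg_X(b)$, and then $\varphi_{13}$ says a $G$-element of finite degrees is uniquely determined by the pair $(\deg_H, \deg_V)$.

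With these in hand I would set $h(0,0) := o_0$ and propagate along $H$- and $V$-successors via $\varphi_4$. The only thing to verify is confluence: the $H$-successor of the $V$-successor of $h(i,j)$ equals the $V$-successor of its $H$-successor. Both are $G$-elements, and by the increment/preservation facts their $E_H$- and $E_V$-degrees are the same finite pair, so they are equivalent under both $\preceq_H$ and $\preceq_V$ and hence equal by $\varphi_{13}$ and reflexivity. By induction $h$ is a well-defined map $\NNs \to G^\fA$ preserving $G$, $H$ and $V$, and it is injective because distinct grid points land on elements with distinct $(\deg_H,\deg_V)$; thus $\fA$ is grid-like. If in addition $\fA \models \chi$, every element has finite $E_H$- and $E_V$-degree, so any $a \in G^\fA$ satisfies $a = h(\deg_H(a),\deg_V(a))$ by the uniqueness above, giving surjectivity; and any $H$- or $V$-edge between two $G$-elements is, by the degree constraints, exactly the corresponding canonical grid edge. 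Hence $h$ is an isomorphism of $\fG$ onto the $\{G,H,V\}$-reduct of $\fA$ restricted to $G^\fA$, i.e. $\fA$ is an $\NNs$-grid.

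The main obstacle is the confluence step, and underneath it the extraction of ``equal finite degree implies $\preceq_X$-equivalence'' from $\varphi_8$–$\varphi_{12}$: the relations $R_X,S_X,C_X$ are only constrained, not prescribed, by $\varphi$, so one must argue that whatever witnesses the model supplies, together with totality of $\preceq_X$ and the cardinality bookkeeping, still pin the orders down on finite-degree elements tightly enough for $\varphi_{13}$ to close the grid up. The restriction to finite degrees is precisely why the $\NNs$-grid conclusion needs $\chi$ whereas grid-likeness does not — the image of $h$ always lies in the finite-degree part of $G^\fA$.
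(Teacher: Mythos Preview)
Your proposal is correct and follows essentially the same route as the paper's own proof: extract the spare part and origin from $\varphi_1$--$\varphi_3$, use $\varphi_5,\varphi_6$ for the degree-increment, use $\varphi_7$ together with the $R_X$-injections from $\varphi_9,\varphi_{10}$ for degree preservation across the other axis, and close the confluence square via $\varphi_8,\varphi_{11},\varphi_{12},\varphi_{13}$ using that a finite injection between equipotent sets is a bijection. The only stylistic difference is that the paper walks through the concrete square $a_{0,0},a_{1,0},a_{0,1},a_{1,1}$ and then says ``repeat'' (invoking Cantor--Schr\"oder--Bernstein for the degree-equality step), whereas you package the same facts as reusable lemmas and build the map $h$ explicitly by induction; both arrive at the same conclusion by the same mechanism.
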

\begin{proof}
    Suppose first that $\fA \models \varphi$.
    Notice that by \ref{eq_FLPCrev_gen_sat_undec_1} there is exactly one element that satisfies $O$ in $\fA$ and, by  \ref{eq_FLPCrev_gen_sat_undec_2}, has no incoming $E_H$- and $E_V$-edges.
    This will be our spare part element in the argument to come.
    Now, take any element $a_{0, 0} \in A$ such that $a_{0, 0} \in G^\fA$ (i.e. $a_{0, 0}$ is a grid element) with finite $E_H$- and $E_V$-degree.
    Such an element is guaranteed to exist by  \ref{eq_FLPCrev_gen_sat_undec_3}. Then,  \ref{eq_FLPCrev_gen_sat_undec_4} gives us that $a_{0, 0}$ has an $H$-successor $a_{1, 0}$ and
    a $V$-successor $a_{0, 1}$. Notice that, by  \ref{eq_FLPCrev_gen_sat_undec_5} each $E_H$-edge originating from $a_{0, 0}$ along with the spare part
    is paired with exactly one edge $E_H$-edge from $a_{1, 0}$ in $S_E^\fA$. That is to say, writing
    $U = \{ b \in A \mid a_{0, 0}b \in E_H^{\fA} \text{ or } b \in O^\fA \}$ and $U' = \{ c \in A \mid a_{1, 0}c \in E_H^{\fA} \}$, we have that
    for each $b \in U$ there is exactly one $c \in U'$ such that $ba_{0, 0} a_{1, 0}c \in S_E^\fA$.
    The reverse is established by  \ref{eq_FLPCrev_gen_sat_undec_6}.
    Clearly, there is a bijection between $U$ and $U'$ thus making the $E_H$-degree of $a_{1, 0}$ one greater than that of $a_{0, 0}$.
    By  \ref{eq_FLPCrev_gen_sat_undec_7} we have that $a_{0, 0} \preceq_V^\fA a_{1, 0}$ and $a_{1, 0} \preceq_V^\fA a_{0, 0}$.
    We first fixate on the fact that $a_{0, 0} \preceq_V^\fA a_{1, 0}$.
    Writing $U = \{ b \in A \mid a_{0, 0}b \in E_V^{\fA} \}$ and $U' = \{ c \in A \mid a_{1, 0}c \in E_V^{\fA} \}$
    we have, by  \ref{eq_FLPCrev_gen_sat_undec_9}, that for each $b \in U$ there is exactly one $c \in U'$ such that $ba_{0, 0} a_{1, 0}c \in R_V^\fA$.
    By  \ref{eq_FLPCrev_gen_sat_undec_10}, for each $c \in U'$ there is at most a single $b \in U$ such that $ba_{0, 0} a_{1, 0}c \in R_V^\fA$.
    We may thus regard $R_V^\fA$ as being an injection between $E_V$-edges of $a_{0,0}$ and that of $a_{1,0}$.
    Then, again by  \ref{eq_FLPCrev_gen_sat_undec_9},  \ref{eq_FLPCrev_gen_sat_undec_10} and the fact that $a_{1, 0} \preceq_V^\fA a_{0, 0}$, we have that
    $R_V^\fA$ is an injection between the $E_V$-edges from $a_{1, 0}$ and that of $a_{0, 0}$.
    By the Cantor-Schr\"{o}der-Bernstein Theorem, their $E_V$-degrees are thus equal.
    A symmetric argument holds for the $E_V$- and $E_H$-degree of $a_{0, 1}$.

    Now, let $a_{1, 1}$ and $a'_{1, 1}$ be, respectively, the $V$-sucessor of $a_{1, 0}$ and the $H$-sucessor of $a_{0, 1}$ promised by  \ref{eq_FLPCrev_gen_sat_undec_4}.
    Using the same arguments as in the paragraph above, it is easy to see that the $E_H$-degrees of $a_{1, 1}$ and $a'_{1, 1}$ coincide; and so do the
    $E_V$-degrees. We claim that $a_{1, 1} = a'_{1, 1}$. By  \ref{eq_FLPCrev_gen_sat_undec_13} we need only show that
    $a_{1, 1}$ is equal to $a'_{1, 1}$ with respect to the orderings $\preceq_H^\fA$ and $\preceq_V^\fA$
    as we already have that $a_{1, 1} \preceq_E a_{1,1}$ and $a_{1, 1} \preceq_V a_{1,1}$ by  \ref{eq_FLPCrev_gen_sat_undec_8}.
    Fixating on $E_H$-edges first, we have,
    by  \ref{eq_FLPCrev_gen_sat_undec_8}, that $a_{1, 1}$ and $a'_{1, 1}$ are comparable by $\preceq_E^\fA$ in some way.
    Suppose, without loss of generality, that $a_{1, 1} \preceq_E^\fA a'_{1, 1}$.
    Writing $U = \{ b \in A \mid a_{1, 1}b \in E_H^{\fA} \}$ and $U' = \{ c \in A \mid a'_{1, 1}c \in E_H^{\fA} \}$
    we have, by  \ref{eq_FLPCrev_gen_sat_undec_9}, that for each $b \in U$ there is exactly one $c \in U'$ such that $ba_{1,1}a'_{1,1}c \in R_E^\fA$, and, by  \ref{eq_FLPCrev_gen_sat_undec_10},
    for each $c \in U'$ there is at most one $b \in U$ such that the same holds.
    That is to say, $R_E^\fA$ is an injection between $E_H$-edges originating from $a_{1, 1}$ and $E_H$-edges from $a'_{1,1}$.
    Notice that since $a_{1, 1}$ and $a'_{1,1}$ both have an equal and  finite $E_H$-degree, we can conclude that $R_E^\fA$ is a bijection
    between the edges.
    Using this, we have that $ca'_{1,1}a_{1,1} \in C_H^\fA$ for each $c \in U'$ by  \ref{eq_FLPCrev_gen_sat_undec_11}.
    Clearly, the antecedents of  \ref{eq_FLPCrev_gen_sat_undec_12} are met and thus $a'_{1, 1} \preceq_E^\fA a_{1, 1}$
    as required.
    Repeating the argument for $\preceq_V^\fA$ we indeed have (by  \ref{eq_FLPCrev_gen_sat_undec_13}) that $a_{1, 1} = a'_{1, 1}$ thus closing the grid.

    By repeating the argument above on element in $G^\fA$ with finite $E_H$- and $E_V$-degree we conclude that
    $\fA$ contains a homomorphic embedding of the canonical $\NNs$-grid thus making it grid-like.

    Supposing, in addition, that $\fA \models \text{\ref{eq_FLPCrev_gen_sat_undec_chi}}$ we have that each element in $G^\fA$ has a finite $E_H$- and $E_V$-degree.
    We may thus unambiguously identify these elements as the pair of their $E_H$-degree $i \in \N$ and $E_V$-degree $j \in \N$.
    Hence, the structure $\fA$ restricted to elements in $G^\fA$ and signature $\{ G, H, V \}$
    is isomorphic to the canonical $\NNs$-grid thus making $\fA$ an $\NNs$-grid as required.
\end{proof}

Combining Lemmas~\ref{lemma_undec_tiling_hard_part_1}~and~\ref{lemma_undec_tiling_hard_part_2}
we have that $\varphi$ is a satisfiable $\FLC^4_{\text{rev}}$-sentence modeled exclusively by (some non-empty subclass of) grid-like structures,
whilst $\varphi \wedge \text{\ref{eq_FLPCrev_gen_sat_undec_chi}}$ is a satisfiable $\FLPC^4_{\text{rev}}$-sentence that is modeled only by (some non-empty subclass of) $\NNs$-grids.
Combining the observation above with Lemma~\ref{lemma_undec_tiling_easy_part} we have the following:
\begin{theorem}\label{th_FLCrev_FLPCrev_gen_undec}
    The satisfiability problem for $\FLC^4_{\text{rev}}$ is $\Pi^0_1$-hard.
    The same problem for $\FLPC^4_{\text{rev}}$ is $\Sigma^1_1$-hard.
\end{theorem}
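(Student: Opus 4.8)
The plan is to assemble the three preceding lemmas into two many--one reductions. Recall that ``does $\Phi = \langle \mathcal{T}, \mathcal{H}, \mathcal{V}\rangle$ tile the infinite $\NN$ plane?'' is $\Pi^0_1$-complete: by König's lemma, failure to tile the quadrant is always witnessed by a finite untileable square, so the complement is recursively enumerable. The recurring tiling problem --- ``does $\Phi$ tile $\NN$ with a designated tile $t_0$ occurring in infinitely many cells of the first column?'' --- is $\Sigma^1_1$-complete \cite{HAREL198551}. I will reduce the former to $\FLC^4_{\text{rev}}$-satisfiability and the latter to $\FLPC^4_{\text{rev}}$-satisfiability; both reductions are polynomial-time computable in $\Phi$.

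For the first reduction I would introduce fresh unary predicates $(T_t)_{t \in \mathcal{T}}$ and take $\vartheta_\Phi$ to be the two-variable reverse-fluted tiling sentence produced in the proof of Lemma~\ref{lemma_undec_tiling_easy_part}, asserting that every $G$-element carries exactly one tile and that tiles along $H$-edges and along $V$-edges between $G$-elements respect $\mathcal{H}$ and $\mathcal{V}$. Setting $\varphi_\Phi := \varphi \wedge \vartheta_\Phi$ gives an $\FLC^4_{\text{rev}}$-sentence. If $\varphi_\Phi$ has a model $\fA$ then $\fA \models \varphi$, so by Lemma~\ref{lemma_undec_tiling_hard_part_2} there is a homomorphic embedding $h$ of the canonical \NNs-grid $\fG$ into $\fA$; reading off the tile of $h(i,j)$ for each $(i,j) \in \NN$ yields a map $\NN \to \mathcal{T}$, which is a legal tiling because $h$ preserves $G$, $H$, $V$ and $\vartheta_\Phi$ constrains tiles along $H$- and $V$-edges. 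Conversely, if $\Phi$ tiles $\NN$, I would decorate the graphed, mapped and ordered expansion $\fG^\#$ of $\fG$ --- a model of $\varphi$ by Lemma~\ref{lemma_undec_tiling_hard_part_1} --- by placing $T_t$ on the grid cell $(i,j)$ when $t$ is the tile there; the result models $\varphi_\Phi$. Hence $\varphi_\Phi$ is satisfiable iff $\Phi$ tiles, establishing $\Pi^0_1$-hardness.

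For the second reduction the key is that periodic counting expresses ``infinitely many'': since $0^{+1} = \N$, the formula $\exists_{[0^{+1}]} x\,\psi$ says $\psi$ has finitely many witnesses, so its negation says it has infinitely many. Keeping $(T_t)_{t \in \mathcal{T}}$ and $\vartheta_\Phi$, I would take $\rho_\Phi$ to be the $\FLPC^2_{\text{rev}}$-sentence $\neg\,\exists_{[0^{+1}]} x\,\big( G(x) \wedge \neg\exists y\, E_H(x,y) \wedge T_{t_0}(x) \big)$ (made reverse fluted by the routine inward movement of quantifiers used throughout this section), which says that infinitely many cells of the first column --- that is, $G$-elements with no $E_H$-successor, under the identification of cells with their pair of $E_H$- and $E_V$-degrees from the proof of Lemma~\ref{lemma_undec_tiling_hard_part_2} --- carry $t_0$. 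Setting $\varphi_\Phi^\star := \varphi \wedge \chi \wedge \vartheta_\Phi \wedge \rho_\Phi$ (recall that $\chi$ is the periodic-counting sentence limiting all $E_H$- and $E_V$-degrees to finite values) gives an $\FLPC^4_{\text{rev}}$-sentence. If it is satisfiable, the second part of Lemma~\ref{lemma_undec_tiling_hard_part_2} forces the $G$-part of any model to be isomorphic to the canonical \NNs-grid, so $\vartheta_\Phi$ delivers a legal tiling of $\NN$ and $\rho_\Phi$ makes $t_0$ recur in the first column; conversely a recurring tiling decorates $\fG^\#$ --- which also models $\chi$ by Lemma~\ref{lemma_undec_tiling_hard_part_1} --- into a model of $\varphi_\Phi^\star$. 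Thus $\varphi_\Phi^\star$ is satisfiable iff $\Phi$ admits a recurring tiling, establishing $\Sigma^1_1$-hardness.

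All the combinatorial content is already carried by Lemmas~\ref{lemma_undec_tiling_hard_part_1} and~\ref{lemma_undec_tiling_hard_part_2}, so the remaining work is bookkeeping. The one point I expect to need care is the asymmetry between the two reductions: for the $\Pi^0_1$ case a mere homomorphic embedding of $\fG$ suffices, since $\vartheta_\Phi$ is universal and uses $H$, $V$ only positively and so transfers along homomorphisms, whereas for the $\Sigma^1_1$ case one genuinely needs the isomorphism with the canonical \NNs-grid forced by $\chi$, because neither ``lies in the first column'' nor ``occurs infinitely often'' survives passage to an arbitrary homomorphic image of $\fG$. Concretely, the main (minor) obstacle is to check that $\rho_\Phi$ really does pin down the first column once $\chi$ is present.
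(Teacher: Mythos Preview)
Your proposal is correct and follows essentially the same route as the paper: the paper combines Lemmas~\ref{lemma_undec_tiling_hard_part_1} and~\ref{lemma_undec_tiling_hard_part_2} with Lemma~\ref{lemma_undec_tiling_easy_part}, whereas you unpack the content of Lemma~\ref{lemma_undec_tiling_easy_part} explicitly by writing down $\vartheta_\Phi$ and $\rho_\Phi$. Your observation that the $\Pi^0_1$ case only needs the homomorphic embedding (since $\vartheta_\Phi$ is preserved along homomorphisms) while the $\Sigma^1_1$ case genuinely needs the isomorphism forced by \ref{eq_FLPCrev_gen_sat_undec_chi} is exactly the asymmetry the paper exploits, and your choice to pin down the first column via $E_H$-degree~$0$ is justified by the identification of grid elements with their $(E_H,E_V)$-degree pair established in the proof of Lemma~\ref{lemma_undec_tiling_hard_part_2}.
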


Note that the adjacent fragment with periodic counting is a fragment of the constructive fragment of $\mathcal{L}_{\omega_1, \omega}$,
which has a $\Sigma^1_1$-complete satisfiability problem \cite{HAREL198551, infi_logic}.
We conclude the section by reformulating results of Theorems~\ref{th_FL_rev_fin_undec}~and~\ref{th_FLCrev_FLPCrev_gen_undec}
in terms of the adjacent fragment:
\begin{corollary} \label{cor_AFC_AFPC_undec}
    The finite and general satisfiability problems for the adjacent fragment with counting are, respectively, $\Sigma^0_1$- and $\Pi^0_1$-complete.
    If periodic counting is permitted, then the general satisfiability problem turns to be $\Sigma^1_1$-complete.
\end{corollary}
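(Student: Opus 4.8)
The plan is to assemble the statement from the hardness results already established, together with routine upper bounds, exploiting the observation made above that $\FLC_{\text{rev}}$ and $\FLPC_{\text{rev}}$ embed---after the quantifier-pushing rewriting noted in Section~\ref{sec_undec}---into the adjacent fragment with the corresponding counting extension. So the proof is essentially a reformulation that collects Theorems~\ref{th_FL_rev_fin_undec} and~\ref{th_FLCrev_FLPCrev_gen_undec} under the umbrella of $\AF$ and complements them with matching membership arguments.

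First I would transfer the lower bounds. Since $\FLC^3_{\text{rev}}$ is a sub-fragment of the adjacent fragment with counting, Theorem~\ref{th_FL_rev_fin_undec} gives $\Sigma^0_1$-hardness of its finite satisfiability problem; since $\FLC^4_{\text{rev}}$ lies in the same language, Theorem~\ref{th_FLCrev_FLPCrev_gen_undec} gives $\Pi^0_1$-hardness of its general satisfiability problem; and since $\FLPC^4_{\text{rev}}$ sits inside the adjacent fragment with periodic counting, the same theorem yields $\Sigma^1_1$-hardness of general satisfiability there.

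Next I would supply the matching upper bounds. Because each threshold counting quantifier $\exists_{[\geq n]}$ is first-order expressible (using equality), the adjacent fragment with counting translates effectively into plain first-order logic; hence its finite satisfiability problem is recursively enumerable (enumerate finite structures over the signature and model-check), so it is in $\Sigma^0_1$, and its general satisfiability problem is co-r.e. by G\"odel's completeness theorem (search for a proof that the negation is valid), so it is in $\Pi^0_1$. For the periodic case I would rewrite each quantifier $\exists_{[n^{+p}]} x\, \varphi$ as the countable disjunction $\bigvee_{i \in \N}$ of the first-order sentences asserting that $\varphi$ has exactly $n{+}ip$ witnesses, which exhibits the adjacent fragment with periodic counting as a sub-fragment of the constructive fragment of $\mathcal{L}_{\omega_1,\omega}$; the satisfiability problem of that fragment is in $\Sigma^1_1$ by the results of~\cite{HAREL198551, infi_logic}, and downward L\"owenheim--Skolem guarantees that restricting attention to countable models loses nothing.

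I do not expect a genuine obstacle: all the real content lies in Theorems~\ref{th_FL_rev_fin_undec} and~\ref{th_FLCrev_FLPCrev_gen_undec}. The only points that need (routine) verification are that the reversed-suffix fragments really do sit syntactically inside the adjacent fragment once quantifiers are moved inward, and that the countable-disjunction encoding of periodic counting stays inside the constructive fragment of $\mathcal{L}_{\omega_1,\omega}$ so that the cited $\Sigma^1_1$ upper bound applies verbatim.
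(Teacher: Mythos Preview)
Your proposal is correct and matches the paper's approach essentially line for line: the paper likewise derives the corollary by noting that $\FLC_{\text{rev}}$ and $\FLPC_{\text{rev}}$ sit inside the adjacent fragment (giving the hardness directions from Theorems~\ref{th_FL_rev_fin_undec} and~\ref{th_FLCrev_FLPCrev_gen_undec}), and obtains the $\Sigma^1_1$ upper bound by observing that the adjacent fragment with periodic counting embeds into the constructive fragment of $\mathcal{L}_{\omega_1,\omega}$ via the same countable-disjunction encoding you describe. The $\Sigma^0_1$/$\Pi^0_1$ upper bounds are left implicit in the paper (as standard), so your explicit justification via model enumeration and completeness is a welcome elaboration rather than a deviation.
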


\section{Discussion} \label{sec:disc}

In this paper we utilised the homogeneity property of satisfiable $\FLPC$-sentences to establish
a decision procedure for the (finite) satisfiability problem of the new language. With this methodology we not only gained a better understanding of
models of fluted formulas, but also managed to establish decidability of (finite) satisfiability using simpler methods when compared to
Presburger quantifiers discussed in previous literature \cite{ph21, ph_lt23}.

Reflecting on global homogeneity we see that, as opposed to local homogeneity, the rewiring in Lemma~\ref{cor:glob_homo} is impacted by the 
presence of the (in)equality atom.
More precisely, the semantics of predicates of arity at most $\ell$ do not interfere in the rewiring for local $\ell$-homogeneity.
Because of this we may establish a more general result for local homogeneity
of fluted sentences with semantic extensions.
Consider a signature that is split into symbols $\sigma^*$ with a fixed interpretation
(e.g. transitive relation, reversed relation, etc.)
and standard predicate symbols $\sigma$ with no fixed meaning. Furthermore, suppose that the maximum arity of any symbol in $\sigma^*$ is at most $k$.
Then, Lemma \ref{cor:loc_homo} implies the following:
\begin{corollary} \label{cor_last}
    Suppose $\varphi$ is a fluted, normal-form, $(\ell{+}1)$-variable sentence (possibly with periodic counting) over the signature $\sigma \cup \sigma^*$,
    and where $\ell \geq k$. Then if $\varphi$ is satisfiable it is satisfiable in a locally $\ell$-homogeneous model.
\end{corollary}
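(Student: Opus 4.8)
The plan is to re-examine the rewiring argument behind Lemma~\ref{cor:loc_homo} and observe that it leaves the fixed-interpretation symbols entirely alone, so it transfers verbatim to the enriched signature $\sigma \cup \sigma^*$. The structural fact that drives everything is this: a fluted atom over $x_1, \dots, x_{\ell{+}1}$ is a suffix $r(x_j, \dots, x_{\ell{+}1})$, hence it mentions the variable $x_1$ exactly when its arity is $\ell{+}1$. Therefore the fragment of $\ftp^\fA_{\ell{+}1}(u_1 \cdots u_{\ell{+}1})$ consisting of atoms of arity at most $\ell$ depends only on the suffix $u_2 \cdots u_{\ell{+}1}$; in fact it is just (the image under index-decrement of) the fluted $\ell$-type of $u_2 \cdots u_{\ell{+}1}$. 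In particular, for any $\bar{b} \in A^{\ell{-}1}$ and any $a, a'$, the types $\ftp^\fA_{\ell{+}1}(a\bar{b}c)$ and $\ftp^\fA_{\ell{+}1}(a'\bar{b}c)$ already coincide on all atoms of arity at most $\ell$, since both record exactly the fluted $\ell$-type of the common suffix $\bar{b}c$.

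First I would perform the rewiring exactly as in the proof of Lemma~\ref{cor:loc_homo}, now over $\sigma \cup \sigma^*$: take any model $\fA \models \varphi$; for each $\bar{b} \in A^{\ell{-}1}$ and each $\pi \in \FTP^{\sigma \cup \sigma^*}_\ell$ fix a representative $a \in A_{\pi \leftarrow \bar{b}}$ and set $\ftp^\fA_{\ell{+}1}(a'\bar{b}c) := \ftp^\fA_{\ell{+}1}(a\bar{b}c)$ for every $a' \in A_{\pi \leftarrow \bar{b}} \setminus \{ a \}$ and every $c \in A$. By the structural fact above, this redefinition changes only atoms of arity exactly $\ell{+}1$, that is, membership of tuples $(a', b_1, \dots, b_{\ell{-}1}, c)$ in $\sigma$-relations of arity $\ell{+}1$; every relation of arity at most $\ell$ retains its interpretation. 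Since $\ell \geq k$ and every symbol of $\sigma^*$ has arity at most $k$, no $\sigma^*$-atom is ever touched, so the $\sigma^*$-reducts of $\fA$ are unchanged and the rewired structure still obeys whatever fixed semantics $\sigma^*$ was required to have (transitivity of a relation, a relation being the reversal of another, and so on), as those constraints are expressed purely among relations of arity at most $\ell$.

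It then remains to re-run, in this setting, the two verifications already carried out for Lemma~\ref{cor:loc_homo}. First, the rewiring is globally well defined: a given $(\ell{+}1)$-tuple $u_1 \cdots u_{\ell{+}1}$ is governed by exactly one rule, indexed by $\bar{b} := u_2 \cdots u_\ell$ and $\pi := \ftp^\fA_\ell(u_1 \cdots u_\ell)$, and this $\pi$ is stable throughout the process because $\ftp^\fA_\ell$ is read off arity-$\leq\ell$ atoms, which are never modified (so, in particular, $A_{\pi \leftarrow \bar{b}}$ does not move). Second, the rewired $\fA$ still models $\varphi$: writing $\varphi$ in the normal form~(\ref{eq:nmf}), the satisfaction of each conjunct $\forall^\ell(\alpha_r \to \exists_{[n_r^{+p_r}]} \gamma_r)$ and $\forall^\ell(\beta_t \to \neg\exists_{[n_t^{+p_t}]} \delta_t)$ at a tuple $a'\bar{b}$ depends only on the (unchanged) fluted $\ell$-type of $a'\bar{b}$, which decides the antecedents, and on the multiset of fluted $(\ell{+}1)$-types emitted by $a'\bar{b}$, which after rewiring equals that emitted by $a\bar{b}$ --- itself unchanged and already compliant because $\fA \models \varphi$; moreover no other tuple's fluted $(\ell{+}1)$-type moves, since the only $(\ell{+}1)$-ary atoms altered belong to the rewired tuples themselves. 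Finally, by construction each $\pi \in \FTP^{\sigma \cup \sigma^*}_\ell$ is $\bar{b}$-homogeneous in the result, for every $\bar{b}$, so the structure is locally $\ell$-homogeneous. The main obstacle is conceptual rather than computational: one must isolate the precise reason the fixed-interpretation symbols are immune to the rewiring, and that reason is exactly the hypothesis $\ell \geq k$, which confines every semantically constrained symbol to arity at most $\ell$ --- that is, to the part of every fluted $(\ell{+}1)$-type that the rewiring provably never disturbs.
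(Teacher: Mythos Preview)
Your proposal is correct and follows precisely the approach the paper intends: the paper presents Corollary~\ref{cor_last} as an immediate consequence of Lemma~\ref{cor:loc_homo}, noting only that ``the semantics of predicates of arity at most $\ell$ do not interfere in the rewiring for local $\ell$-homogeneity,'' and you have spelled out exactly why --- the rewiring alters only arity-$(\ell{+}1)$ atoms, so under the hypothesis $\ell \geq k$ every symbol in $\sigma^*$ is left untouched. Your added care in checking that the rewiring is globally well defined (because fluted $\ell$-types, which index the rule applied to each tuple, are themselves stable under the process) is a detail the paper leaves implicit but which is indeed needed.
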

The same cannot be said about $(\ell{+}1)$-variable sentences when $\ell < k$, and thus establishing an analogue to global homogeneity
(as was done for $\FLPC^2$ with equality in Lemma \ref{cor:glob_homo}) requires case-by-case consideration.
Nonetheless, we believe our approach could not only be used to simplify existing decidability procedures for satisfiability
(e.g. for $\FL$ with a transitive relation and counting \cite{ph_lt23}) but to also expand on expressiveness of fluted languages.

We make use of Corollary \ref{cor_last} when (briefly) analysing the language $\mathcal{L}$
formed by combining $\FLPC$ and $\mathcal{FO}^2_{\text{Pres}}$.
That is to say, $\mathcal{L}$ is $\FLPC$ with $(\ell{+}1)$-atoms $R(x_{\ell{+}1}, R_\ell)$ and $R(x_{\ell{+}1}, x_{\ell{+}1})$ allowed.
The following is almost immediate. (See Appendix \ref{app:combined_logic} for the proof).
\begin{theorem}\label{th:combined_logic}
    The (finite) satisfiability problem for $\mathcal{L}$
    is decidable.
\end{theorem}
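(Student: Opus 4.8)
The plan is to reduce the (finite) satisfiability problem for $\mathcal{L}$ to the (finite) satisfiability problem for $\mathcal{FO}^2_{\text{Pres}}$, which is known to be decidable by \cite{BenediktKT20}. The reduction follows the same two-stage blueprint used in Sections~\ref{sec:2vars} and~\ref{sec:more}: first pass to a homogeneous model, then perform a variable-reduction step. The crucial new ingredient is that the extra atoms $R(x_{\ell+1}, x_\ell)$ and $R(x_{\ell+1}, x_{\ell+1})$ only involve the two innermost variables, so they constitute a ``semantic'' signature $\sigma^*$ of arity at most $k = 2$. Hence, as long as we work in the $(\ell+1)$-variable fragment with $\ell \geq 2$, Corollary~\ref{cor_last} applies verbatim: every satisfiable normal-form $\mathcal{L}$-sentence $\varphi \in \mathcal{L} \cap \mathcal{FO}^{\ell+1}$ with $\ell \geq 2$ has a locally $\ell$-homogeneous model.

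First I would put $\varphi$ into a suitable normal form (the same shape as (\ref{eq:nmf}), with the quantifier-free matrices now allowed to mention the reversed/diagonal binary atoms in their innermost two variables), which is routine rewriting. Next, for $\ell \geq 2$, I would run the variable-reduction construction of Lemmas~\ref{FLPC_var_reduct_1} and~\ref{FLPC_var_reduct_2} essentially unchanged: introduce the fresh $(\ell-1)$-ary predicates $q_\pi$ recording which $(\ell-1)$-tuples extend to a fluted $\ell$-type $\pi$, and fresh $\ell$-ary predicates $s_{\pi,\tau}$ recording which $(\ell+1)$-types are absorbed. The only thing to check is that the non-fluted binary atoms cause no trouble here; they do not, because those atoms live among the two innermost variables $x_\ell, x_{\ell+1}$, so when we descend one level they simply become part of the fluted $(\ell+1)$-type data $\tau$ (over the enriched signature) that $s_{\pi,\tau}$ tracks, and local $\ell$-homogeneity is exactly what guarantees the $s_{\pi,\tau}$ are well defined. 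Iterating this descent from $\ell+1$ down to $3$ produces an equisatisfiable $\mathcal{L} \cap \mathcal{FO}^{3}$-sentence, and one more application (now with $\ell = 2$, still $\ell \geq k$) produces an equisatisfiable two-variable sentence. The resulting two-variable sentence uses only periodic counting quantifiers and binary atoms with no flutedness restriction whatsoever, so it lies in $\mathcal{FO}^2_{\text{Pres}}$; moreover each reduction step preserves finiteness of models, so finite satisfiability is preserved too.

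Finally I would invoke the decidability of (finite) satisfiability for $\mathcal{FO}^2_{\text{Pres}}$ from \cite{BenediktKT20} to conclude. The main obstacle—and really the only point requiring care—is the base case of the descent: after the last reduction step we must be sure that no residual ``fluted-only'' obligations remain that $\mathcal{FO}^2_{\text{Pres}}$ cannot express. Since $\mathcal{FO}^2_{\text{Pres}}$ imposes no ordering constraint on how the two variables appear in atoms, and since the $q_\pi$, $s_{\pi,\tau}$ predicates introduced at the penultimate level are at most binary, everything that survives is expressible there; the reversed and diagonal atoms of $\mathcal{L}$ are in particular legal two-variable atoms. One should also double-check that the normal-form and each reduction step run effectively (indeed in elementary time relative to the number of enriched fluted types), so the whole procedure is a genuine many-one reduction; this is immediate from the constructions cited. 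With that, the theorem follows.
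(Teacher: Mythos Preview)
Your proposal is correct and follows essentially the same route as the paper: put $\varphi$ in normal form, invoke Corollary~\ref{cor_last} with $\sigma^*$ the at-most-binary predicates (so $k=2\le\ell$), apply the variable-reduction of Lemmas~\ref{FLPC_var_reduct_1} and~\ref{FLPC_var_reduct_2}, and bottom out at $\mathcal{FO}^2_{\text{Pres}}$ via \cite{BenediktKT20}. The paper phrases this as an induction on the number of variables rather than an explicit iterated descent, but the content is identical.
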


Combining the result above with Corollary~\ref{cor_AFC_AFPC_undec},
and noting that more expressive counting quantifiers render the two-variable fragment undecidable
\cite{gradel_undec, bartosz_percent}, one can argue that the language $\mathcal{L}$ is on the edge of decidability (for satisfiability).
There are, however, other maximal fragments with counting that have a decidable satisfiability problem.
The reader might have noticed that, in Section~\ref{sec_undec}, we did not consider the general satisfiability problem for
the 3-variable adjacent fragment with counting (whilst noting that the finite variant is undecidable in Theorem~\ref{th_FL_rev_fin_undec}).
Surprisingly, the problem was recently shown to be in $\Delta^0_1$ by the current author.
The details, however, are beyond the scope of this article.
We conclude the paper by outlining the following problems which, to the best of our knowledge, are open.
\begin{enumerate}
    \item What is the complexity of satisfiability for the 3-variable adjacent fragment with counting?
    \item Is the satisfiability problem for the adjacent fragment with periodic counting $\Sigma^1_1$-hard?
    \item Is the satisfiability problem for the guarded adjacent fragment with counting decidable?
\end{enumerate}

\bibliography{references}

\appendix

\section{Preliminaries}
\label{app:nmf}
\newtheorem*{Restate_nmf}{Lemma~\ref{lma:nmf}}
\begin{Restate_nmf}
    Suppose $\varphi$ is an $\FLPC^{\ell{+}1}$-sentence. Then, we may compute, in polynomial time, an equisatisfiable
    normal-form $\FLPC^{\ell{+}1}$-sentence $\psi$.
\end{Restate_nmf}
\begin{proof}
    We start by assuming that $\varphi$ contains no universal quantifiers. No loss of generality follows this supposition as
    every formula of the form $\forall \theta$ is equivalent to $\exists_{[0]} \neg \theta$.
    Writing $\varphi_0 := \varphi$, take any subformula $\theta := \exists_{[n^{+p}]} \chi$ of $\varphi_0$, where
    $\chi$ is quantifier-free. Supposing there are $k$ free variables in $\theta$, let $q$ be a fresh predicate of arity $k$.
    We write $\psi_1$ as
    $
    \forall^\ell ( q \to \exists_{[n^{+p}]} \chi) \wedge \forall^\ell ( \neg q \to \neg \exists_{[n^{+p}]} \chi)
    $
    and define $\varphi_1$ to be $\varphi_0$ but with $\theta$ replaced by $q$. Clearly, $\varphi_1 \wedge \psi_1 \models \varphi_0$.
    Conversely, if $\fA \models \varphi_0$, we may expand $\fA$ to $\fA'$ by setting $\bar{a} \in q^{\fA'}$ if $\fA, \bar{a} \models \theta$
    for each $\bar{a} \in A^k$. Then, $\fA' \models \varphi_1 \wedge \psi_1$ as required.
    Processing $\varphi_1$ and subsequent sentences in the same way, we are left with a sentence $\varphi_m$ composed solely of proposition letters and
    sentences $\psi_1, \dots, \psi_m$. The conjunction of the aforementioned sentences is then (after rearrangement) of the required form.
\end{proof}

\section{Counting With Reversed Relations}

\label{app:undec_missing}

\begin{claim*}
    Suppose $\mathcal{E}$ is an instance of Hilbert's 10th problem and $\varphi$ is computed
    as described as above Theorem~\ref{th_FL_rev_fin_undec}. Then, $\varphi$ is finitely satisfiable if and only if
    $\mathcal{E}$ has a solution over $\mathbb{N}$.
\end{claim*}
\begin{claimproof}
    Suppose $\fA \models \varphi$.
        We claim that $\{ u \mapsto |A_u^\fA| \mid u \in \text{vars}(\mathcal{E}) \}$ is a satisfying assignment for $\mathcal{E}$.
        Thus, again taking $e \in \mathcal{E}$ in turn, we have that if $e$ is of the form \ref{hilb_1} $u = 1$, then
        $\fA \models \varphi_e \implies |A_u^\fA| = 1$. If $e$ takes the form \ref{hilb_2} $u + v = w$, we then claim
        that $R_e^\fA$ is a bijection between $A_u^\fA \cup A_v^\fA$ and $A_w^\fA$ having cardinality $|A_u^\fA| + |A_v^\fA|$.
        Indeed, by the first conjunct of $\varphi_e$ we have that each element in $A_u^\fA \cup A_v^\fA$ is paired with a single element in $A_w^\fA$;
        the converse is establishes by the second conjunct. Thus, clearly, $|A_u^\fA \cup A_v^\fA| = |A_w^\fA|$.
        But since $\fA \models \psi$ we have that $A_u^\fA \cap A_v^\fA = \emptyset$. This coupled with our initial assumption that $u \neq v$ gives us
        $|A_u^\fA \cup A_v^\fA| = |A_u^\fA| + |A_v^\fA|$ as required.
        Lastly, suppose $e$ takes the form \ref{hilb_3} $u \cdot v = w$. By the first conjunct of $\varphi_e$
        for each $ab \in A_u^\fA \times A_v^\fA$ there is a single $c \in A_w^\fA$ such that $abc \in P_e^\fA$.
        Hence, $P_e^\fA$ gives rise to a function
        $
            f := \{ ab \mapsto c \mid abc \in P_e^\fA \cap (A_u^\fA \times A_v^\fA \times A_w^\fA) \}.
        $
        Thus, writing $f(xy)=z$ in place of $P_e(xyz)$, we claim that $f$ is a bijection between $A_u^\fA \times A_v^\fA$
        and $A_w^\fA$ and has cardinality $|A_u^\fA| \cdot |A_v^\fA|$.
        It is easily seen that $f$ is surjective from either the second or third conjunct of $\varphi_e$.
        To establish injectivity suppose $f(ab) = f(a'b') = c$.
        But, by the second conjunct of $\varphi_e$, we have that
        $
            \fA, c \models \exists_{[=1]} y \; \Big( A_v(y) \wedge \exists x \; \big( A_u(x) \wedge f(xy)=z \big) \Big),
        $
        thus $b = b'$. Notice that this establishes that $b$ is the only element in $A_v^\fA$ for which, under the assignment $c \mapsto z$, $b \mapsto y$,
        $
            \fA, cb \models \exists x \; \big( A_u(x) \wedge f(xy)=z \big).
        $
        Combining this fact with the third conjunct of $\varphi_e$ we have that, again under the assignment $c \mapsto z$, $b \mapsto y$,
        $
            \fA, cb \models \exists_{[=1]} x \; \big( A_u(x) \wedge f(xy)=z \big)
        $
        thus making $a = a'$.
        We finish our argument by noting that $|A_u^\fA| \cdot |A_v^\fA| = |A_u^\fA \times A_v^\fA| = |f| = |A_w^\fA|$ as required.
    
        Conversely, suppose $\mathcal{E}$ has a solution. We define $\pi: \text{vars}(\mathcal{E}) \to \mathbb{N}$ to be the satisfying assignment for $\mathcal{E}$
        and construct a model $\fA$ of $\varphi$ as follows. For each variable $u \in \text{vars}(\mathcal{E})$ set $A_u^\fA$
        be a set of $\pi(u)$ distinct elements and set the domain of $\fA$ to be $A = \bigcup_{u \in \text{vars}(\mathcal{E})} A_u^\fA$
        where each $A_u^\fA \cap A_v^\fA = \emptyset$ for $u \neq v$. Clearly, $\fA \models \psi$.
        If $\varphi_e$ was constructed from $e \in \mathcal{E}$ of the form \ref{hilb_1} $u = 1$, then $A_u^\fA = 1$ as required by $\varphi_e$.
        On the other hand, if $e$ is of the form \ref{hilb_2} $u + v = w$, we have that $\fA \models \varphi_e$ by setting $R_e^\fA$
        to be a bijection between $A_u^\fA \cup A_v^\fA$ and $A_w^\fA$ (this can be done as $\pi(u) + \pi(v) = \pi(w)$ and, by initial assumption, $u \neq v$).
        Lastly, if $e$ is \ref{hilb_3} $u \cdot v = w$, then $\pi(u) \cdot \pi(v) = \pi(w)$. Thus, index elements of $A_u^\fA$ as $a_1 \dots a_{\pi(u)}$,
        elements of $A_v^\fA$ as $b_1 \dots b_{\pi(v)}$ and elements of $A_w^\fA$ as $(c_{i, j})^{1 \leq i \leq \pi(u)}_{1 \leq j \leq \pi(v)}$.
        Clearly, by setting $P_e^{\fA} = \{ a_i b_j c_{i,j} \mid 1 \leq i \leq \pi(u), 1 \leq j \leq \pi(v) \}$ we have that $\fA \models \varphi_e$
        thus concluding the proof.
\end{claimproof}
\section{Discussion}

\label{app:combined_logic}
\newtheorem*{Restate_combined_logic}{Theorem~\ref{th:combined_logic}}
\begin{Restate_combined_logic}
    The (finite) satisfiability problem for $\mathcal{L}$ (i.e. the combination of $\FLPC$ and $\mathcal{FO}^2_{\text{Pres}}$)
    is decidable.
\end{Restate_combined_logic}
\begin{proof}
    Let $\mathcal{L}^{\ell+1}$ be the ${\ell+1}$-variable sub-fragment of $\mathcal{L}$.
    Taking some sentence $\varphi \in \mathcal{L}^{\ell+1}$ we proceed by induction on the number of variables.
    If $\ell+1 = 2$, then $\varphi$ is a sentence in the two-variable fragment with periodic counting which is known
    to have a decidable (finite) satisfiability problem \cite{BenediktKT20}.
    Now, set $\ell+1 > 2$ and suppose that the (finite) satisfiability problem for $\mathcal{L}^\ell$ is decidable.
    We may assume (by allowing $\alpha_r, \beta_t, \gamma_r, \delta_t$ to contain binary predicates of the form $R(x_{\ell+1}, x_{\ell})$ and $R(x_{\ell+1}, x_{\ell+1})$)
    that $\varphi$ is in normal-form (\ref{eq:nmf}).
    Defining $\sigma^*$ to be the set of predicates of arity no more than 2, we have that if $\varphi$ is satisfiable
    then, by Corollary \ref{cor_last}, it is satisfiable in an $\ell$-homogeneous model.
    Applying the variable reduction outlined in Lemmas~\ref{FLPC_var_reduct_1}~and~\ref{FLPC_var_reduct_2} we will then have the required result.
\end{proof}

\end{document}